\newtheorem{proposition}{Proposition}
\newtheorem{remark}{Remark}
\begin{document}
%
\title{Graph Neural Networks for Distributed Power Allocation in Wireless Networks: Aggregation Over-the-Air
\thanks{This work was supported in part by the National Natural Science Foundation of China under Grant 62201361, in part by the Guangdong Provincial Department of Science and Technology under Project 2020B1212030002, in part by the National Natural Science Foundation of China under Grant 62101293, and in part by the Shenzhen Science and Technology Innovation Commission, China. (\textit{Corresponding author: Zhi Quan.})}
\thanks{Y. Gu and Z. Quan are with School of Electrical and Information Engineering, Shenzhen University, Shenzhen 518060, Guangdong, China (email: yifan.gu@szu.edu.cn, zquan@szu.edu.cn).}
\thanks{C. She is with School of Electrical and Information Engineering, The University of Sydney, Sydney, NSW 2006, Australia (email: shechangyang@gmail.com).}
\thanks{C. Qiu is with Department of Mathematics and Theories, Peng Cheng Laboratory, Shenzhen 518066, Guangdong, China (email:qiuch@pcl.ac.cn).}
\thanks{Xiaodong Xu is with the State Key Laboratory of Networking and Switching Technology, Beijing University of Posts and Telecommunications, Beijing 100876, China, and also with the Department of Broadband Communication, Peng Cheng Laboratory, Shenzhen 518066, Guangdong, China (e-mail: xuxiaodong@bupt.edu.cn).}
}
\author{Yifan Gu, Changyang She, Zhi Quan, Chen Qiu, and Xiaodong Xu
}

\maketitle

\begin{abstract}
Distributed power allocation is important for interference-limited wireless networks with dense transceiver pairs. In this paper, we aim to design low signaling overhead distributed power allocation schemes by using graph neural networks (GNNs), which are scalable to the number of wireless links. We first apply the message passing neural network (MPNN), a unified framework of GNN, to solve the problem. We show that the signaling overhead grows quadratically as the network size increases. Inspired from the over-the-air computation (AirComp), we then propose an Air-MPNN framework, where the messages from neighboring nodes are represented by the transmit power of pilots and can be aggregated efficiently by evaluating the total interference power. The signaling overhead of Air-MPNN grows linearly as the network size increases, and we prove that Air-MPNN is permutation invariant. To further reduce the signaling overhead, we propose the Air message passing recurrent neural network (Air-MPRNN), where each node utilizes the graph embedding and local state in the previous frame to update the graph embedding in the current frame. Since existing communication systems send a pilot during each frame, Air-MPRNN can be integrated into the existing standards by adjusting pilot power. Simulation results validate the scalability of the proposed frameworks, and show that they outperform the existing power allocation algorithms in terms of sum-rate for various system parameters.
\end{abstract}

\begin{IEEEkeywords}
Graph neural network, message passing neural network, power allocation, distributed algorithms
\end{IEEEkeywords}


\IEEEpeerreviewmaketitle
\section{Introduction}
With the ever-growing demands of connections in the next generation of wireless networks, the conflicts between the massive connections and limited spectrum resource is getting increasingly severe \cite{spectrum}. Transmit power allocation is one of the key enabler to support massive connections under the spectrum scarcity \cite{powerallocation}. Consider a wireless network with dense transceiver pairs and full spectrum reuse, e.g., Device-to-Device (D2D) \cite{D2D2}, Ad-Hoc \cite{adhoc}, and ultra-dense cellular \cite{ultradense} networks, where the neighbour links can generate excessive interference to each other if they use the maximum transmit power. By optimizing the power allocation, it is possible to adjust the transmit power of each link according to the dynamics of wireless channels, such that a certain utility function is maximized. Nevertheless, when multiple transceiver pairs share a single frequency band, the power allocation problem for sum-rate maximization can be very challenging, which is NP-hard and non-convex in general \cite{NP-hard1}.

A well-known power allocation algorithm, i.e., weighted minimum mean squared error (WMMSE) \cite{WMMSE,WMMSE1}, is a centralized iterative algorithm. To implement WMMSE in large-scale wireless networks, the computing complexity is high and a large number of iterations is needed to converge to a stationary point. As a centralized algorithm, WMMSE needs the channel state information (CSI) of all the interference links. It leads to high signaling overhead in practical systems.

Inspired by the recent advances in data-driven models, deep learning-based methods were proposed to solve the power allocation problems in wireless networks to combat the computation complexity \cite{DL1,DL3,DL4,DL5,DL6}. Specifically, deep neural networks (DNNs) were adopted to learn the mapping from the channel dynamics to the optimal policy, such that the online computation time of the policy can be significantly reduced. For example, the authors in \cite{DL1} approximated the power allocation policy obtained from the WMMSE algorithm by a fully-connected deep neural network (FNN). When using FNNs in wireless networks, there are two issues: 1) FNN does not exploit the network topology. When the topology of a wireless network becomes different, we need to re-train the FNN. 2) FNN is not scalable to the number of nodes in a wireless network. When the number of wireless links is dynamic, we need to change the structure of the FNN, such as the dimensions of the input layer and the hidden layers \cite{DL4}. In large-scale dynamic wireless networks, the nodes can be dense and the environment is highly dynamic, which restricts the implementation of FNN \cite{Cshe1}.

To obtain scalable solutions for dynamic wireless networks, graph neural networks (GNNs) \cite{GNN1} were introduced to solve the power allocation problems of large-scale dynamic wireless networks, where each communication link is represented by a node and each interference link is represented by an edge. The GNN can extract the neighbour information of a node in a graph according to the network topology and update the representation of the node, i.e., graph embedding. The graph embedding is a low-dimensional vector representing the whole graph or part of the graph, and we refer the graph embedding as the representation of each node in this paper \cite{Distancec_based}. After several rounds of updates, the graph embedding contains information of the neighbour network multiple-hops away from the node, and can be used to determine the optimal resource allocation policies \cite{Cshe2,Cshe3}. Among different GNN architectures, spatial convolutional graph neural network \cite{GCN} is one of the most widely used architectures in solving the power allocation problems for wireless networks \cite{GCNwirelessreview1}. In the following, we will consider the spatial GNNs and refer them as GNN for short.

Several centralized GNN-based power allocation policies were developed in the existing literature. In \cite{REGNN}, the authors considered a large-scale dynamic wireless network with a set of transmitters and receivers, and formulated a generic resource allocation problem. Based on the GNN, a random edge graph neural network (REGNN) framework was introduced to solve the problem where the fading channels were modelled as the adjacency matrix of a graph. A prime-dual learning algorithm was designed to train the weights of REGNN in an unsupervised manner. The authors in \cite{Power_allocation_multi_cell} considered the power allocation problem for a multi-cell-multi-user wireless network. With the fact that the features of base stations and users are different, a heterogenous GNN (HGNN) was adopted to model the system and optimize the sum-rate performance. 

Compared with the above centralized policies, distributed polices are more practical in dynamic large-scale wireless networks \cite{distributed_benefit}. The authors in \cite{unfold} proposed a distributed unfold WMMSE (UWMMSE) framework to depart conventional WMMSE algorithm based on the GNN. Compared with the WMMSE, the scalability of the proposed UWMMSE was greatly improved. Distributed power allocation policies for HGNN was studied in \cite{HGCN1}, where a wireless network consisting of multiple transmitters and receivers with different numbers of antennas was considered. The authors in \cite{V2V} adopted the deep reinforcement learning (DRL) framework to solve the resource allocation problem of a vehicular wireless network in a distributed manner. They defined the observation for the DRL as the graph embedding in the GNN that aggregates the network state information at each node. The learned policy can scale well to different network sizes. In \cite{YShen_CGNET_conference,Yshen_CGNET_trans}, the authors proposed the wireless channel graph convolution network (WCGCN) to solve the power allocation problems of wireless networks, and developed policies that can be implemented in a distributed manner. Very recently, the authors in \cite{distributed_GNN} considered the wireless channel impairments in the distributed implementation of GNN, and analyzed the performance of a GNN-based binary classifier for both uncoded and coded wireless communication systems. To implement the GNN-based policies in a distributed manner, the aforementioned works utilized the framework of message passing neural network (MPNN). MPNN is a unified framework of GNN that enables the graph convolution operation by updating the graph embedding via message passing and aggregation \cite{MPNN}.
\subsection{Motivation}
Based on the aforementioned GNN-based frameworks, it is possible to obtain distributed and scalable power allocation policy in large-scale dynamic wireless networks. Nevertheless, as shown in \cite{unfold,HGCN1, YShen_CGNET_conference,Yshen_CGNET_trans,V2V,distributed_GNN}, the signaling overhead of executing those policies can grow fast as the network size increases. The signaling overhead may include the CSI estimation overhead and the message passing overhead between the nodes. To execute the graph convolution operation in a distributed manner, the nodes need to firstly obtain the CSI of multiple wireless links, including the direct communication link and the neighbour interference links \cite{YShen_CGNET_conference,Yshen_CGNET_trans}. After that, the nodes need to exchange their graph embeddings multiple times. The signaling overhead can be dominant and may degrade the system performance significantly for large-scale wireless networks. In this context, our work is driven by the following question: \textit{Can we design a distributed low-complexity power allocation policy that is scalable to the number of wireless links and has low signaling overhead?} To answer this critical question, in this paper, we formulate a generic power allocation problem, and propose the novel Air-MPNN framework that exploits the interference signals in message passing.

The idea of the proposed Air-MPNN is inspired by the concept of the over-the-air computation (AirComp) \cite{AirComp} and the MPNN framework. In AirComp systems, the devices can transmit messages simultaneously by using non-orthogonal wireless resources, and the server can decode an arithmetic of the messages directly from the superimposed signal, such as sum, mean or max. To be more specific, the nodes need to acquire the CSI of the wireless links and generate messages to compensate the CSI in a symbol-wised manner in order to calculate the arithmetic of the aggregated messages. Different from AirComp, the proposed Air-MPNN framework only needs the aggregation of the neighbor nodes embeddings and edge features, i.e., CSI of interference links. To achieve this, Air-MPNN does not need to decode the symbols from the messages as in AirComp, and only needs to estimate the total interference power from all the interference links. Considering the fact that the pilots from the interference links carry the CSI of the interference links, we can thus treat the aggregation of pilots as the aggregation of the features of the interference links. Since the aggregated messages are represented by the aggregated interference, the proposed Air-MPNN can accomplish the graph convolution operation without knowing the individual CSI of the interference link. To update the graph embedding, only a limited number of message exchanges are needed.

To further reduce the signaling overhead of the Air-MPNN framework, we exploit the temporal correlation of network states in wireless networks. For example, the subframe duration in 5G New Radio (NR) is 1~ms and the maximum frame duration in Wi-Fi 802.11ac is 5.5~ms, where the channel coherence time can be longer than $10$~ms. Therefore, it is possible to utilize the graph embeddings in the previous frame to update the embeddings in the current frame. In this way, the nodes only need to perform the graph convolution operation once in the current frame. Inspired by the conventional recurrent neural network (RNN), we develop a message passing recurrent neural network (MPRNN). With Air-MPRNN, we can further reduce the signaling overhead, and it can be potentially integrated into the existing wireless networks without changing the frame structures in the standards since all the nodes only need to send the pilot signal once. To summarize the evolution from MPNN to Air-MPNN, and then to Air-MPRNN, the MPNN requires the nodes to broadcast pilot one by one for multiple times, while the Air-MPNN just requires all the nodes to broadcast pilot simultaneously by multiple times. Finally, the Air-MPRNN only needs all the nodes to broadcast the pilot simultaneously once. To the best of our knowledge, this is the first paper to propose novel GNN-based architectures by exploiting the interference signals in message passing and temporal correlation of network states to reduce signaling overhead.
\subsection{Contributions}
The main contributions of this paper are summarized as follows:
\begin{itemize}
  \item We formulate a generic power allocation problem in a large-scale wireless network with dense communication links and strong interference. Without loss of generality, we take the weighted sum-rate maximization problem as an example to find the optimal policy. We show that when implementing MPNN-based policy in a distributed manner, the signaling overhead for CSI estimation grows quadratically as the network size increases, and the overhead for message passing grows linearly with the size of the wireless network and the number of layers of the GNN.
  \item We design a novel Air-MPNN framework, where the aggregated messages are represented by the total interference power. We prove that the total interference power is a permutation invariant operation of node embeddings and edge features by showing that it is a special case of sum. We also show that the signaling overhead for CSI estimation of the proposed Air-MPNN framework grows linearly as the network size increases. In terms of the overhead for message passing, all the nodes only need to broadcast the generated message once for updating the graph embedding of each layer of the GNN.
  \item We then propose the Air-MPRNN framework by combining the Air-MPNN with RNN to capture the temporal information of wireless networks. The proposed Air-MPRNN framework can estimate the CSI of the communication links and aggregate the messages from interference links in one phase. During each frame, all the nodes can obtain the graph embedding of the output layer and the local CSI efficiently by only broadcasting the pilot signal once, where the transmit power of each pilot is determined by the graph embedding and local CSI in the previous frame.
  \item Through extensive simulation results\footnote{The codes to reproduce the simulation results are available on https://github.com/Yifan-Gu-SZU/GNN-aggregation-over-the-air.}, we illustrate that the proposed Air-MPNN and Air-MPRNN frameworks can outperform the existing GNN-based framework in terms of signaling overhead. They also outperform WMMSE and equal power allocation (EPA) in terms of the sum-rate when the signaling overhead is considered. In addition, we evaluate the impact of various system parameters on the performance, and validate the scalability of the proposed frameworks under different network sizes and link densities.
\end{itemize}
\textit{Notation}: We use lower-case letters to denote scalar numbers, while bold lower case, and upper case letters to represent vectors and matrixes, respectively. We let ${a^*}$ to be the conjugate of the complex scalar $a$. ${\bf A}^T$ and ${\bf A}^H$ denote the transpose, and Hermitian transpose of matrix ${\bf A}$, respectively. $\mathbb{C}$ is the set of complex numbers. $\left|\cdot\right|$ and $\left\|\cdot\right\|$ refer to the absolute value of a complex scalar, and the Euclidean vector norm, respectively. ${\cal{CN}}\left(\mu,\sigma^2\right)$ represents the circularly symmetric complex Gaussian distribution with mean $\mu$ and variance $\sigma^2$. We use $\mathbb{E}$ to denote the expectation operation and $\cal{O}$ to be the big $O$ notation. We let ${\left[ {\cdot} \right]_{ij}}$ to be the element of the $i$-th row and $j$-th column in a two-dimensional matrix.
\section{System Model and Problem Formulation}
\subsection{System Model}
\begin{figure}[!h]
	\centerline{\includegraphics[width=0.45\textwidth]{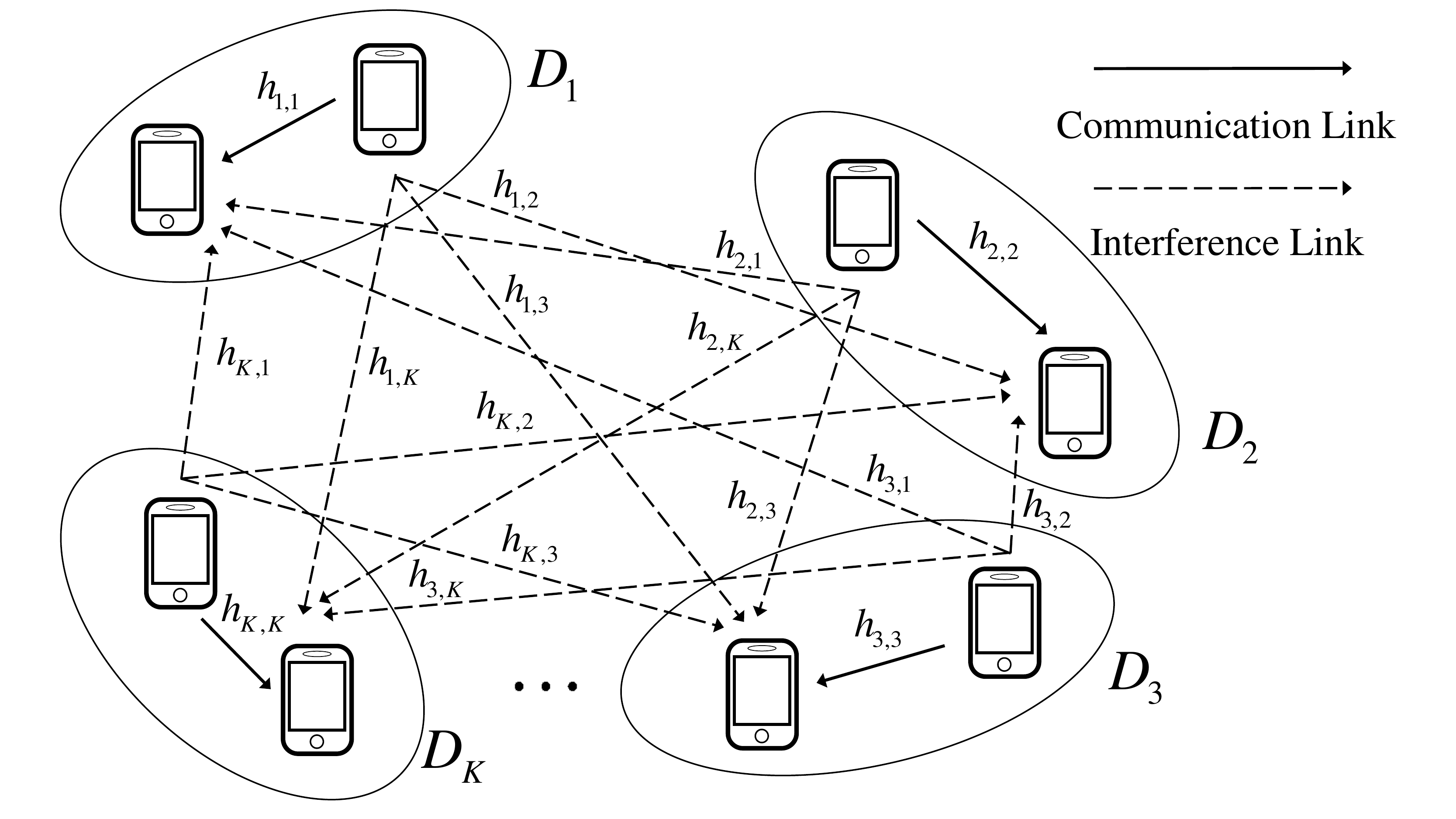}}
	\caption{The considered wireless network with $K$ D2D pairs.}
	\label{fig_system_model}
\end{figure}
As depicted in Fig. \ref{fig_system_model}, we consider a D2D network consisting of $K$ randomly located transceiver pairs $D_1,D_2,\cdots,D_K$, and it is not hard to extend our method to other types of networks, such as Ad-Hoc networks, and ultra-dense cellular networks. We assume that all the D2D pairs share the same spectrum and both the transmitter and the receiver of each D2D pair are single-antenna devices. We use $h_{i,i}$ to denote the channel coefficient of the direct D2D communication link for $D_i$, and $h_{i,j}$ to represent the channel coefficient of the interference link between $D_i$ and $D_j$, $i,j \in \left\{1,2,\cdots,K\right\}$. We assume that the channel coefficient remains unchanged during each frame and changes across different frames. We use ${{{\bf{H}}_I}} \in {\mathbb{C}^{K \times K}}$ to represent the channel coefficient matrix of all the interference links, with ${\left[ {{{\bf{H}}_I}} \right]_{ij}} = {h_{ij}}$ if $i \ne j$ and ${\left[ {{{\bf{H}}_I}} \right]_{ij}} = 0$ if $i=j$. Moreover, we use ${{\bf{z}}_i} \in {\mathbb{C}^{1 \times L}}$ to denote the local state information of the $i$-th D2D communication link, including the channel coefficient of the D2D link $h_{i,i}$, the weight $w_i$, and etc.. Note that we can assign larger weights to D2D pairs with higher priority or longer link distance. In this way, its priority or fairness is guaranteed. For notation simplicity, we use ${\bf{Z}} \in {\mathbb{C}^{K \times L}} = {\left[ {{{\left( {{{\bf{z}}_1}} \right)}^T},{{\left( {{{\bf{z}}_2}} \right)}^T}, \cdots ,{{\left( {{{\bf{z}}_K}} \right)}^T}} \right]^T}$ to represent the local state information for all the D2D links. With the above definitions, the global network state information of the wireless network is given by $\left\{{{{\bf{H}}_I}},{\bf{Z}} \right\}$.
\begin{figure}[!h]
	\centerline{\includegraphics[width=0.45\textwidth]{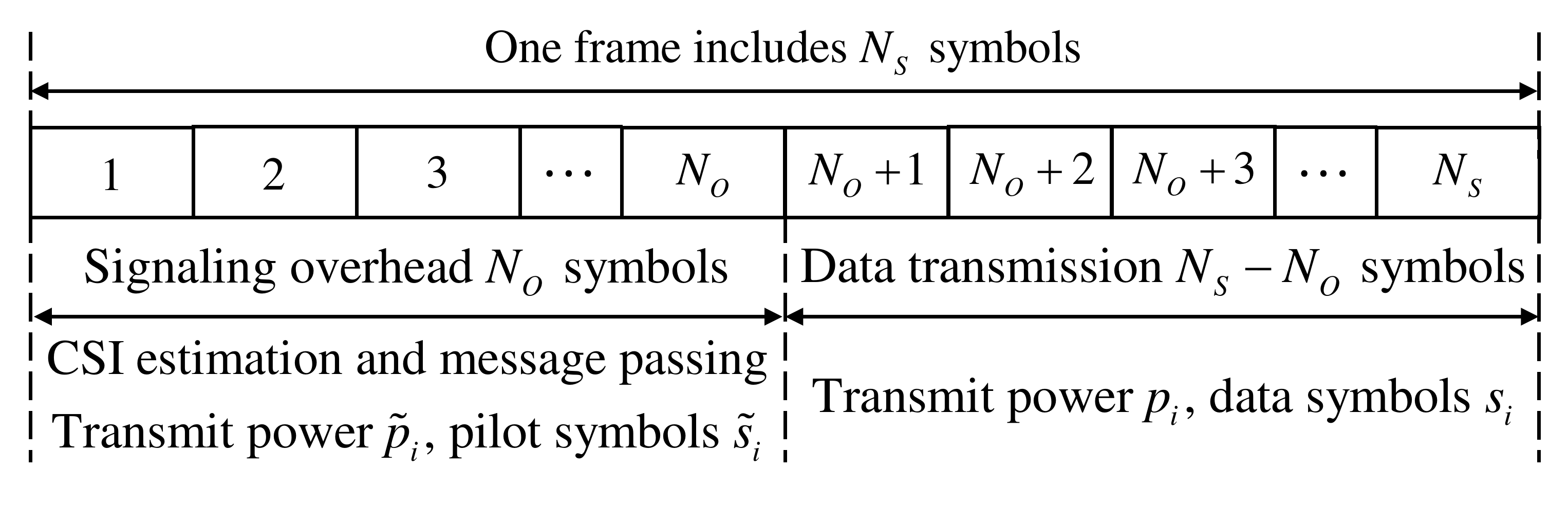}}
	\caption{The frame structure of the considered D2D network.}
	\label{fig_frame_structure}
\end{figure}

Time is discretized into frames. As shown in Fig. \ref{fig_frame_structure}, each wireless link can transmit a total number of $N_S$ symbols in one frame, where $N_O$ symbols are used for transmitting a pilot, and $N_S-N_O$ symbols are used for data transmission. Specifically, the pilot represents the demodulation reference signal for channel estimation in the 5G NR systems, or the long training field (LTF) in the Wi-Fi systems. To execute GNN-based policy in the wireless network, the signaling overhead should also include the symbols for message passing. At the beginning of each frame, we let $o_i$ to denote the local observation at the $i$-th D2D pair in order to execute the policy in a distributed manner. Since the value of $N_O$ and the definition of $o_i$ depend on specific policies, we will discuss them later.
We denote the pilot sequence used by the $i$-th D2D pair by $\tilde{s}_i$. The transmit power for sending $\tilde{s}_i$ is $\tilde{p}_i$. After CSI estimation and message passing, each D2D pair can determine a local transmit power $p_i$ for data transmission during the remaining data transmission phase. The data symbols of the $i$-th D2D pair is denoted by~$s_i$.
\subsection{Generic Power Allocation Problem Formulation}
A generic power allocation policy is a mapping from the global network state information to the transmit power, i.e., normalized transmit power, of all the D2D links for data transmission, i.e., ${\bf{p}} = f\left( {{{\bf{H}}_I},{\bf{Z}}} \right)$. We use $r_i\left( {{\bf{p}};{{\bf{H}}_I},{\bf{Z}}} \right)$ to represent the reward function of the $i$-th D2D pair. The objective function is the expected weighted sum of rewards $\bar{r} = \mathbb{E}\left[ \sum\limits_{i = 1}^K {w_ir_i\left( {{\bf{p}};{{\bf{H}}_I},{\bf{Z}}} \right)}  \right]$, where the expectation is over the global network state information $\left\{{{{\bf{H}}_I}},{\bf{Z}}\right\}$. Examples of rewards can be rate, decoding error probability, energy efficiency and etc.. Formally, the power allocation problem can be described~as
\begin{equation}\label{resource_management_problem}
\begin{split}
  &{f}\left( {{{\bf{H}}_I},{\bf{Z}}} \right) =  \arg \max{\bar{r}},  \\
  & \text{s.t.} \quad \bar{r} = \mathbb{E}\left[ \sum\limits_{i = 1}^K {w_ir_i\left( {{\bf{p}};{{\bf{H}}_I},{\bf{Z}}} \right)}  \right],  \\
  & \quad \quad 0 \le p_i \le1, \forall i, \\
\end{split}
\end{equation}

Different from the existing works that did not consider the signaling overhead in executing the policy, we are aiming to find a policy with very limited signaling overhead, such that the proposed solutions can be implemented in large-scale dynamic wireless networks. To be more specific, we use different types of GNNs to represent the optimal policy ${f}\left( {{{\bf{H}}_I},{\bf{Z}}} \right)$ and train these GNNs in a centralized manner. By carefully designing the structures of these GNNs, the inference can be executed in a distributed manner with local observation and very low signaling overhead. In other words, after centralized training, the GNN-based policies can be executed by each D2D pair without a centralized controller.
\subsection{Weighted Sum-Rate Maximization Problem Example}
In this subsection, we take a weighted sum-rate maximization problem as an example of problem (\ref{resource_management_problem}). In this example, the local state of each D2D communication link $D_i$ is defined as
\begin{equation}\label{zk1}
{{\bf{z}}_i} = \left[ {{h_{i,i}},{w_i}}\right],
\end{equation}
where $h_{i,i}$ and $w_i$ are the channel coefficient, and the weight of the $i$-th D2D link, respectively. The received signal of the $i$-th D2D pair during data transmission phase can be expressed~as
\begin{equation}\label{signal}
y_i = \sqrt {p_i} {h_{i,i}}{s_i} + \sum\limits_{j = 1,j \ne i}^K {\sqrt {p_j}} {h_{j,i}}{s_j}  + {n_i},
\end{equation}
where $n_i \sim {\cal{CN}}\left(0,\sigma^2\right)$ is the additive white Gaussian noise (AWGN) at the receiver side. From the received signal, the signal-to-interference-plus-noise ratio (SINR) for $D_i$ can be expressed~as
\begin{equation}\label{SINR}
{\xi _i}\left({{\bf{p}},{\bf{H}}_I,{\bf{Z}}}\right) = {{{p_i}{{\left| {{h_{i,i}}} \right|}^2}} \over {\sum\limits_{j = 1,j \ne i}^K {{p_j}{{\left| {{h_{j,i}}} \right|}^2} + {\sigma ^2}} }}.
\end{equation}
The reward of the $i$-th D2D pair is given by the Shannon's capacity, i.e.,
\begin{equation}\label{rewardrate}
\begin{split}
r_i\left( {{\bf{p}};{{\bf{H}}_I},{\bf{Z}}} \right) & = {{{N_S - N_O} \over {N_S}}{{\log }_2}\left( {1 + {\xi _i}\left({{\bf{p}},{\bf{H}}_I,{\bf{Z}}}\right)} \right)}  \\
&= {{{N_S - N_O} \over {N_S}}{{\log }_2}\left( {1 + {{{p_i}{{\left| {{h_{i,i}}} \right|}^2}} \over {\sum\limits_{j = 1,j \ne i}^K {{p_j}{{\left| {{h_{j,i}}} \right|}^2} + {\sigma ^2}} }}} \right)}.
\end{split}
\end{equation}
By defining the total reward of the wireless network as the weighted sum-rate, the problem can be captured by (\ref{resource_management_problem}).

\section{MPNN for Distributed Power Allocation}
In this section, we first adopt the existing MPNN framework to solve problem (\ref{resource_management_problem}). Based on this framework, we discuss the observation required for each D2D pair to execute the policy in a distributed manner, and analyze the signaling overhead for obtaining the observation, i.e., $N_O^{\rm mpnn}$.
\begin{figure*}[!t]
	\centerline{\includegraphics[width=0.8\textwidth]{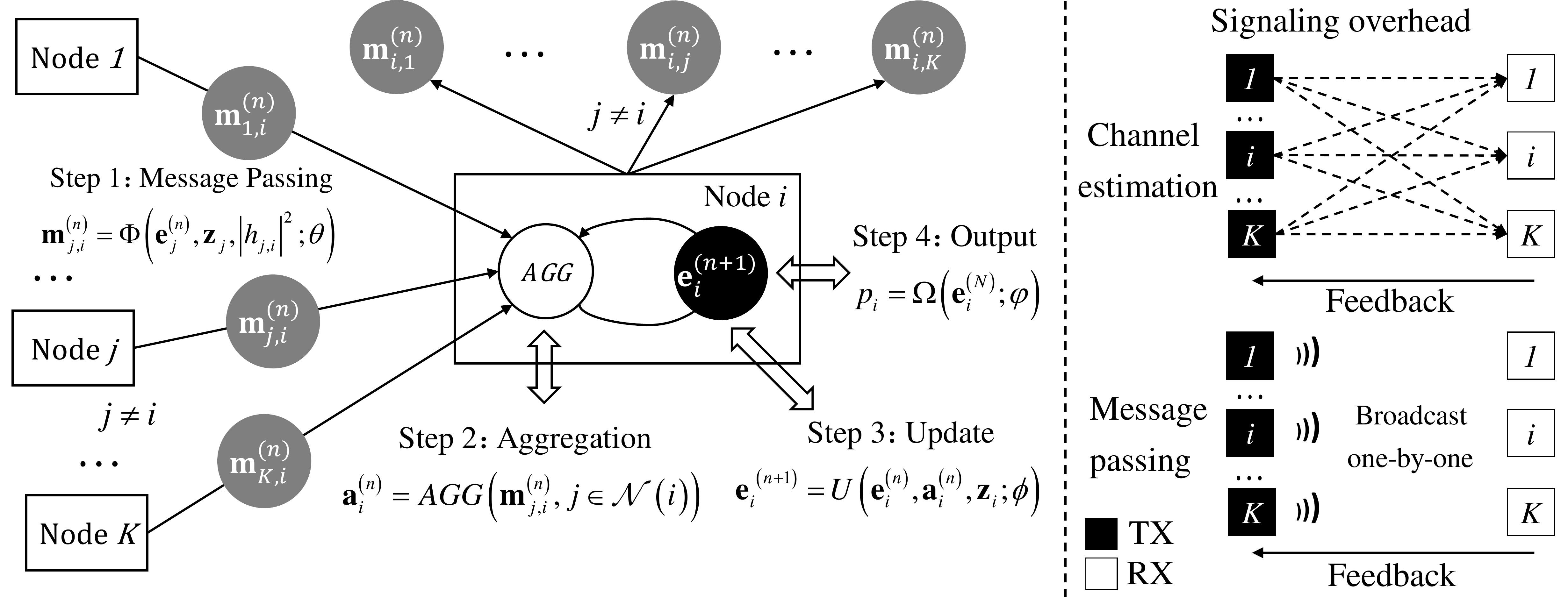}}
	\caption{The illustration of the existing MPNN framework.}
	\label{fig_MPNN}
\vspace*{4pt}
\end{figure*}
\subsection{Graph Model}
We represent the wireless network by a fully connected graph $G = \left( {V,{\cal E}} \right)$. It is composed of a set of nodes $V$, and a set of edges $\cal E$. A node $v_i \in V$ represents a D2D pair $D_i$, and an edge $\varepsilon _{i,j} \in {\cal E}$ defines the interference link from $D_i$ to $D_j$. The features of the node $v_i$ involves the local state information of the $i$-th D2D communication link, e.g., ${{\bf{z}}_i} = \left[ {\left|{h_{i,i}}\right|^2,{w_i}} \right]$. The feature of the edge $\varepsilon_{i,j}$ is defined as the channel power gain $\left| {{h_{i,j}}} \right|^2$ of the interference link from the $i$-th D2D pair to the $j$-th D2D pair. Note that the features of all the nodes and edges are included in the global network state information $\left\{{{\bf{H}}_I},{\bf{Z}}\right\}$. We choose $\left| {{h_{i,j}}} \right|^2$ instead of ${{h_{i,j}}}$ as the feature because the channel coefficient has both real, and imaginary parts that may increase the complexity of the model. For the power allocation problem formulated in Sec. II-C, the channel power gain is sufficient for determining the optimal power for data transmission~\cite{goldsmith2005wireless}.
\subsection{Power Allocation via MPNN}
The convolution kernel of MPNN is designed to have a message passing phase, an aggregation phase and an update phase. For a better understanding, we depict an illustration of the graph convolution operation in the MPNN framework in Fig. \ref{fig_MPNN} on top of this page. During the message passing phase, each node generates messages based on its local feature and edge features of the neighbours, and passes the generated messages to different neighbours. In the aggregation phase, the node aggregates the messages from all of its neighbours to generate an aggregated information. Next in the update phase, the node considers both the aggregated information from its neighbours and its local features to update its embedding. The embedding obtained by one step of update is a representation of its local features, and the features of its neighbors one-hop away. Formally, the message passing, aggregation, and update phases of the MPNN framework can be expressed as
\begin{equation}\label{message}
\text{Message passing:}\quad {\bf m}_{j,i}^{\left( n \right)} = \Phi \left( {{\bf {e}} _j^{\left( n \right)},{\mathbf{z}_j},\left| {{h_{j,i}}} \right|^2;\theta } \right),
\end{equation}
\begin{equation}\label{agg}
\text{Aggregation:}\quad  {\bf {a}} _i^{\left( n \right)} = AGG\left( {{\bf m}_{j,i}^{\left( n \right)},j \in {\cal N}\left( i \right)} \right),
\end{equation}
\begin{equation}\label{update}
\text{Update:}\quad{{\bf {e}}_i}^{\left( {n + 1} \right)} = U\left( {{\bf {e}}_i^{\left( n \right)},{\bf {a}}_i^{\left( n \right)},{\mathbf{z}_i};\phi } \right),
\end{equation}
where ${\bf m}_{j,i}^{\left( n \right)}$ is the message from node $v_j$ to node $v_i$ in the $n$-th layer of graph convolution, and ${\cal N}\left( i \right)$ denotes the set of neighbour nodes of $v_i$. ${\bf {a}}_i^{\left( n \right)}$ is the aggregated message from all the neighbours of $v_i$, and ${\bf {e}}_i^{\left( n \right)}$ is the embedding of node $v_i$. Both ${\bf {a}}_i^{\left( n \right)}$ and ${\bf {e}}_i^{\left( n \right)}$ are the features in the $n$-th layer of the GNN. For the $0$-th layer, the graph embedding can be initialized to zero, i.e., ${\bf {e}}_i^{\left( 0 \right)}={\bf0}$. To ensure permutation invariance, i.e., local invariance, we can use summation, mean value, or maximum value of the inputs for the aggregation function, $AGG \in \left\{sum, mean, max\right\}$. With these functions, the output of $AGG\left(.\right)$ does not change with the order of the inputs. The message function $\Phi \left( {\cdot;\theta } \right)$, and update function $U\left( {\cdot;\phi } \right)$ are multi-layer perceptrons (MLPs) with parameters $\theta$ and $\phi$, respectively. After $N$ rounds of updates, the final embedding of each node ${\bf {e}}_i^{\left( N \right)}$ is utilized to determine its optimal transmit power for data transmission. Let $\Theta = \left\{\theta,\phi\right\}$ represents the parameters of the MPNN model, the power allocation problem (\ref{resource_management_problem}) can be reformulated as
\begin{equation}\label{problem2}
\begin{split}
  &{\Theta, \varphi} = \arg \max {\bar{r}},  \\
  & \text{s.t.} \quad {\bf {e}}_i^{\left( N \right)} = {\rm MPNN}\left( {{{\bf{H}}_I},{\bf{Z}};\Theta } \right),\forall i,  \\
  & \quad \quad {p_i} = {\Omega \left( {{\bf {e}}_i^{\left( {N } \right)};\varphi } \right)},\forall i,  \\
  & \quad \quad  \bar{r} = \mathbb{E}\left[ \sum\limits_{i = 1}^K {w_ir_i\left( {{\bf{p}};{{\bf{H}}_I},{\bf{Z}}} \right)}  \right], \\
\end{split}
\end{equation}
where ${\rm MPNN}\left(.\right)$ is the model with the graph convolution kernel defined in (\ref{message})-(\ref{update}). To ensure the maximum transmit power constraint given in (\ref{resource_management_problem}), we can adopt an MLP $\Omega\left(\cdot;\varphi \right)$ with a sigmoid activation function at the output layer to infer the transmit power from the embedding, such that the transmit power is normalized between 0 and 1. In problem (\ref{problem2}), we use the MPNN model with parameters $\Theta$ and $\varphi$ to approximate the optimal power allocation function ${f}\left( . \right)$ defined in problem (\ref{resource_management_problem}). The parameters can be optimized by stochastic gradient descend, where we use a batch of samples for different realizations of the network state information by $\left\{{{{\bf{H}}_I}},{\bf{Z}} \right\}$ to estimate the gradient, i.e., ${\nabla _{\Theta ,\varphi }}\mathbb{E}\left[ { - \sum\limits_{i = 1}^K {{w_i}{r_i}\left( {{\bf{p}};{{\bf{H}}_I},{\bf{Z}}} \right)} } \right]$.
\subsection{Signaling Overhead of Distributed Implementation via MPNN}
The above power allocation policy is trained in a centralized manner. In this section, we define the observation required for each node for distributed inference. To compute (\ref{message})-(\ref{update}), each node needs to obtain the observation given by
\begin{equation}\label{observationMPNN}
o_i^{{\rm{mpnn}}} = \left\{ {{h_{i,i}},{w_i},\underbrace {\left\{ {{h_{j,i}},\forall j \ne i} \right\},\left\{ {{\bf{e}}_j^{\left( n \right)}, {\mathbf{z}_j}, \forall j \ne i,n} \right\}}_{{\rm Determine}\;{\bf{m}}_{j,i}^{\left( n \right)}\;{\rm in \;(6)\;and\;compute\;}{\bf{a}}_i^{\left( n \right)}\;{\rm in\;(7)}}} \right\}.
\end{equation}
Note that in (\ref{observationMPNN}), each node $v_i$ needs the information ${\left\{ {{h_{j,i}},\forall j \ne i} \right\}}$ and $\left\{ {{\bf{e}}_j^{\left( n \right)}, {\mathbf{z}_j}, \forall j \ne i,n} \right\}$ to determine the messages from its neighbours and compute the aggregated message ${{\bf{a}}_i^{\left( n \right)}}$ for each layer of GNN. Then, the node can update its local embedding with the aggregated message and its local state.
\subsubsection{Signaling Overhead for CSI Estimation}According to (\ref{observationMPNN}) and Fig. \ref{fig_MPNN}, each node $v_i$ needs the CSI of $K$ links, including its D2D communication link $h_{i,i}$, and the interference links $h_{j,i}$, $1 \le j \le K$, $j \ne i$. When there are $K$ nodes in the wireless network, the overhead for CSI estimation is given by ${\cal{O}}_{\rm CSI}^{\rm{mpnn}}\left(K^2\right)$.
\subsubsection{Signaling Overhead for Message Passing}To compute the graph convolution in a distributed manner, the nodes may exchange the messages ${\bf m}_{j,i}^{\left( n \right)}$ between each other. For an $N$-layer GNN, when there are $K$ nodes in the wireless network, the total signaling overhead of message passing is ${\cal{O}}_{\rm MP}^{\rm{mpnn}}\left(NK\left(K-1\right)\right)$. Alternatively, the $i$-th node can compute ${\bf m}_{j,i}^{\left( n \right)}$ locally rather than exchange them between the nodes. In this way, based on (\ref{message}), the $j$-th node only needs to broadcast its embedding and local states, i.e., ${\bf{e}}_j^{\left( n \right)}, {\mathbf{z}_j}$, $N$ rounds in one frame\footnote{The processing delay for updating embeddings between two communication rounds may not be negligible. This may lead to higher overhead and long latency. In Section V, we will illustrate how to address this issue by using an Air-MPRNN framework.}. Since there are $K$ nodes, the overhead for message passing is given by ${\cal{O}}_{\rm MP}^{\rm{mpnn}}\left(NK\right)$. Because the latter approach has lower signaling overhead, we use broadcast for message passing hereafter.

According to the above analysis, the total signaling overhead for the distributed execution of MPNN, denoted by $N_{\rm O}^{\rm mpnn}$, can be expressed as
\begin{equation}\label{MPNNoverhead}
N_{O}^{\rm mpnn} = K^2\delta_{\rm csi} + NK\delta_{\rm mp},
\end{equation}
where $\delta_{\rm csi}$ is the number of symbols used for channel estimation, and $\delta_{\rm mp}$ is the number of symbols used to encode the graph embedding and local features $\left\{ {{\bf{e}}_j^{\left( n \right)}, {\mathbf{z}_j}}\right\}$.

\section{Air-MPNN Design for the Distributed Power Allocation}
In the previous section, we have shown in (\ref{MPNNoverhead}) that the signaling overhead for CSI estimation and message passing grows fast as the network size increases, which may degrade the system performance significantly for large-scale networks. To reduce the signaling overhead, we develop the Air-MPNN framework for the power allocation problems in large-scale dynamic wireless networks.
\subsection{Air-MPNN Framework}
It is worth noting that the received power of the pilot signals from the interference links depends on the CSI of the interference links. Some interesting questions thus arise: \textit{Can we design an MPNN-based framework to utilize the power of the pilot signals from the interference links to aggregate the features of neighbor nodes and edges without knowing CSI of each interference link? Is it possible for all the nodes to broadcast the pilot simultaneously to accomplish the message passing and aggregation phases of the graph convolution operation?} In response to these questions, we propose the novel Air-MPNN framework such that each node only needs to select its transmit power ${\tilde p}_i^{\left( n \right)}$ according to its local features ${\bf{z}}_i$, and local embedding ${\bf e}_i^{\left( n \right)}$ during the message passing and aggregation phases. After that, all the nodes broadcast their pilots simultaneously. Each node can accomplish the graph convolution computation by evaluating the total power of the received pilots from the interference links. The graph convolution kernel of the proposed Air-MPNN framework is formally defined as
\begin{equation}\label{AirMPNNmessage}
\text{Pilot transmit power:}\quad {\tilde p}_i^{\left( n \right)}= \Phi \left( {{\bf e} _i^{\left( n \right)},{{\bf{z}}_i};\theta } \right),
\end{equation}
\begin{equation}\label{AirMPNNmessageagg}
\begin{split}
&\text{Air message passing and aggregation:}\\
&\tilde a_i^{\left( n \right)} = sum\left( {{\tilde p}_j^{\left( n \right)}{{\left| {{h_{j,i}}} \right|}^2},j \in {\cal N}\left( i \right)} \right),
\end{split}
\end{equation}
\begin{equation}\label{AirMPNNupdate}
\text{Update:}\quad {{\bf e}_i}^{\left( {n + 1} \right)} = U\left( {{{\bf e}_i}^{\left( {n} \right)},\tilde a_i^{\left( n \right)},{{\bf{z}}_i};\phi } \right),
\end{equation}
\begin{equation}
\text{Output:}\quad {p_i} = {\Omega \left( {{\bf {e}}_i^{\left( {N } \right)};\varphi } \right)}.
\end{equation}
\begin{figure*}[!t]
	\centerline{\includegraphics[width=0.8\textwidth]{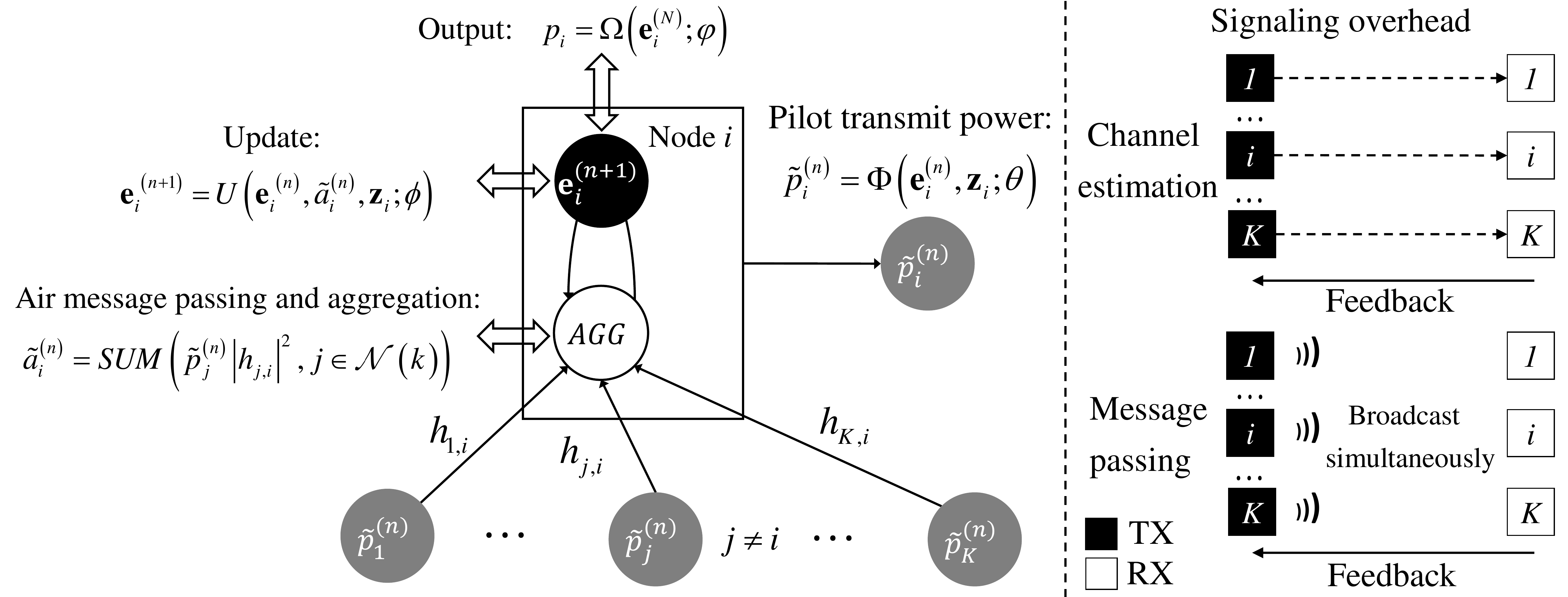}}
	\caption{The illustration of the proposed Air-MPNN framework, where all the nodes in each convolution layer require one round message passing.}
	\label{fig_airMPNN}
\vspace*{4pt}
\end{figure*}

For a better understanding, we depict an illustration of the graph convolution kernel for the proposed Air-MPNN framework in Fig. \ref{fig_airMPNN} on top of the next page. Compared with the graph convolution kernel of the existing MPNN framework given in (\ref{message})-(\ref{update}), the key difference of our proposed Air-MPNN framework is to replace the message passing and aggregation phases defined in (\ref{message}) and (\ref{agg}) with the air message passing and aggregation phase given in (\ref{AirMPNNmessageagg}), where the embedding and local features are represented by the transmit power of pilot in (\ref{AirMPNNmessage}). Specifically, in the existing MPNN framework, each node generates the message by considering its local state, local embedding, and the CSI of the interference link, i.e., ${\bf m}_{j,i}^{\left( n \right)} = \Phi \left( {{\bf {e}} _j^{\left( n \right)},{\mathbf{z}_j},\left| {{h_{j,i}}} \right|^2;\theta } \right)$. In the proposed Air-MPNN framework, each node generates the message only based on its local states and its embedding, i.e., ${\tilde p}_i^{\left( n \right)}= \Phi \left( {{\bf e} _i^{\left( n \right)},{{\bf{z}}_i};\theta } \right)$. From Fig. \ref{fig_MPNN} and Fig. \ref{fig_airMPNN}, the existing MPNN framework requires each node to broadcast message one by one for $N$ times to compute the graph convolution, while the proposed Air-MPNN framework only requires all the nodes to broadcast the pilot simultaneously for $N$ times. The aggregated messages in the Air-MPNN can be obtained by the following proposition.
\begin{proposition}
When all the nodes simultaneously broadcast orthogonal pilot sequences ${{{\bf{\tilde s}}}_i}$ with transmit power ${\tilde p}_i^{\left( n \right)}$, we can use the following equation to compute the aggregated message $\tilde a_i^{\left( n \right)}$ in (\ref{AirMPNNmessageagg}) and it is permutation invariant,
\begin{equation}\label{term}
\tilde a_i^{\left( n \right)} \approx {\left\| {{{{\bf{\tilde y}}}_i}} \right\|^2} - {\left| {{\bf{\tilde y}}_i^H{{{\bf{\tilde s}}}_i}} \right|^2},
\end{equation}
where $\tilde {\bf y}_i$ is the received signal at the $i$-th node.
\end{proposition}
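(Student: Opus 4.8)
The plan is to show two things separately: first, that the right-hand side of \eqref{term} indeed recovers $\tilde a_i^{(n)} = \sum_{j \in {\cal N}(i)} \tilde p_j^{(n)} |h_{j,i}|^2$ up to the noise-related approximation; and second, that this quantity is permutation invariant because it is a special case of the $sum$ aggregator already shown to be permutation invariant in Section III-B. I would begin by writing out the received signal at node $v_i$ during the pilot broadcast phase. Since all nodes transmit their orthogonal pilot sequences ${\bf \tilde s}_j$ simultaneously with power $\tilde p_j^{(n)}$, and the direct link contributes its own pilot, the received vector is $\tilde {\bf y}_i = \sqrt{\tilde p_i^{(n)}}\, h_{i,i} {\bf \tilde s}_i + \sum_{j \ne i} \sqrt{\tilde p_j^{(n)}}\, h_{j,i} {\bf \tilde s}_j + {\bf n}_i$, mirroring the data-transmission model in \eqref{signal}. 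I would then compute $\|\tilde {\bf y}_i\|^2$, the total received pilot energy. Assuming the pilot sequences are orthonormal, so that ${\bf \tilde s}_j^H {\bf \tilde s}_k = \delta_{jk}$ (after normalization by the pilot length), the cross terms between distinct pilots vanish and $\|\tilde {\bf y}_i\|^2 \approx \tilde p_i^{(n)} |h_{i,i}|^2 + \sum_{j \ne i} \tilde p_j^{(n)} |h_{j,i}|^2 + \|{\bf n}_i\|^2$, where the noise cross-terms are neglected (this is where the $\approx$ enters).

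Next I would handle the subtracted term. The projection $\tilde {\bf y}_i^H {\bf \tilde s}_i$ extracts the component along the $i$-th pilot: by orthonormality this equals $\sqrt{\tilde p_i^{(n)}}\, h_{i,i}^* + {\bf n}_i^H {\bf \tilde s}_i$, so $|\tilde {\bf y}_i^H {\bf \tilde s}_i|^2 \approx \tilde p_i^{(n)} |h_{i,i}|^2$, again dropping a noise contribution. Subtracting, $\|\tilde {\bf y}_i\|^2 - |\tilde {\bf y}_i^H {\bf \tilde s}_i|^2 \approx \sum_{j \ne i} \tilde p_j^{(n)} |h_{j,i}|^2 = \sum_{j \in {\cal N}(i)} \tilde p_j^{(n)} |h_{j,i}|^2 = \tilde a_i^{(n)}$, which is exactly \eqref{AirMPNNmessageagg}. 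The point is that the first term gathers \emph{all} pilot energy (direct plus interference plus noise), and the second term removes precisely the node's own contribution, leaving the aggregated interference power, i.e.\ the sum over neighbors of (pilot transmit power $\times$ edge channel gain). Since in the fully connected graph ${\cal N}(i) = V \setminus \{v_i\}$, this is consistent.

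For permutation invariance, I would argue that $\tilde a_i^{(n)}$ as expressed in \eqref{AirMPNNmessageagg} is literally the $sum$ of the terms $\tilde p_j^{(n)} |h_{j,i}|^2$ over the neighbor index set, and each such term is a function only of the neighbor's message $\tilde p_j^{(n)} = \Phi({\bf e}_j^{(n)}, {\bf z}_j; \theta)$ and the edge feature $|h_{j,i}|^2$. Because addition is commutative and associative, relabeling the neighbors does not change the value of the sum; this is exactly the condition in Section III-B under which $AGG \in \{sum, mean, max\}$ yields permutation-invariant aggregation, and $sum$ is one of those choices. Hence \eqref{term}, which equals $\tilde a_i^{(n)}$ up to the noise approximation, inherits permutation invariance. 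The main obstacle — or rather the only delicate point — is making the approximation precise: I would state explicitly that it relies on orthonormal (unit-energy) pilot sequences so the cross-correlations ${\bf \tilde s}_j^H {\bf \tilde s}_k$ vanish for $j \ne k$, and that the neglected terms are the noise energy $\|{\bf n}_i\|^2$ and the signal–noise cross terms, whose expected magnitude is small relative to the aggregated interference when the pilot length is moderate and SNR is not too low; for an exact statement one could either average over the noise or treat the receiver as having a long enough pilot that $\|{\bf n}_i\|^2$ concentrates on $N_O \sigma^2$ and can be subtracted as a known bias.
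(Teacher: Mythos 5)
Your proposal is correct and follows essentially the same route as the paper's proof: expand ${\left\| {{{{\bf{\tilde y}}}_i}} \right\|^2}$ using pilot orthogonality and unit pilot energy, subtract the own-pilot projection ${\left| {{\bf{\tilde y}}_i^H{{{\bf{\tilde s}}}_i}} \right|^2}$ to remove the direct-link term, and observe that the remaining sum over neighbors is a permutation-invariant special case of the $AGG$ function. Your explicit accounting of the neglected noise energy and signal--noise cross terms (and the remark that $\left\|{\bf n}_i\right\|^2$ could be subtracted as a known bias) is slightly more careful than the paper's blanket interference-limited approximation, but the substance of the argument is identical.
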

\begin{proof}
The received signal can be expressed as
\begin{equation}\label{signalMPNN}
\tilde {\bf y}_i = \sqrt {{\tilde p}_i^{\left( n \right)}} {h_{i,i}}\tilde {\bf s}_i + \sum\limits_{j = 1,j \ne i}^K {\sqrt {{\tilde p}_j^{\left( n \right)}} {h_{j,i}}\tilde {\bf s}_j}  + {{\bf n}_i}.
\end{equation}
When the pilot sequences are orthogonal, we have ${\bf{\tilde s}}_i^H{{{\bf{\tilde s}}}_j} = 0,\forall i\ne j$, and ${\left\| {{{{\bf{\tilde y}}}_i}} \right\|^2}$ is given by
\begin{equation}\label{term1}
\begin{split}
{\left\| {{{{\bf{\tilde y}}}_i}} \right\|^2} & = {\bf{\tilde y}}_i^H{{{\bf{\tilde y}}}_i}\\
&\mathop  \approx \limits^{\left( a \right)} \left( {\sqrt {\tilde p_i^{\left( n \right)}} {{h}_{i,i}^*}{\bf{\tilde s}}_i^H + \sum\limits_{j = 1,j \ne i}^K {\sqrt {\tilde p_j^{\left( n \right)}} {{ h}_{j,i}^*}{\bf{\tilde s}}_j^H} } \right)\\
&\quad \times\left( {\sqrt {\tilde p_i^{\left( n \right)}} {h_{i,i}}{{{\bf{\tilde s}}}_i} + \sum\limits_{k = 1,k \ne i}^K {\sqrt {\tilde p_k^{\left( n \right)}} {h_{k,i}}{{{\bf{\tilde s}}}_k}} } \right)\\
&\mathop  = \limits^{\left( b \right)} \tilde p_i^{\left( n \right)}{\left| {{h_{i,i}}} \right|^2}{\bf{\tilde s}}_i^H{{{\bf{\tilde s}}}_i} + \sum\limits_{j = 1,j \ne i}^K {\tilde p_j^{\left( n \right)}{{\left| {{h_{j,i}}} \right|}^2}{\bf{\tilde s}}_j^H{{{\bf{\tilde s}}}_j}} \\
&\mathop  = \limits^{\left( c \right)} \tilde p_i^{\left( n \right)}{\left| {{h_{i,i}}} \right|^2} +  sum\left( {{\tilde p}_j^{\left( n \right)}{{\left| {{h_{j,i}}} \right|}^2},j \in {\cal N}\left( i \right)} \right).\\
\end{split}
\end{equation}
Approximation in (a) is accurate when noise power is negligible compared with signal power and interference power. It is the case in practical dense D2D networks, where interference is the bottleneck for achieving high data rate. For equality $\left(b\right)$, since ${\bf{\tilde s}}_i^H{{{\bf{\tilde s}}}_j} = 0$, $\forall i\ne j$, we only have ${\bf{\tilde s}}_i^H{{{\bf{\tilde s}}}_i}$ and ${\bf{\tilde s}}_j^H{{{\bf{\tilde s}}}_j}$. Expression $\left(c\right)$ is obtained under the assumption that the pilot sequences have unit power, i.e., ${\bf{\tilde s}}_i^H{{{\bf{\tilde s}}}_i}=1$.

In addition, ${\left| {{\bf{\tilde y}}_i^H{{{\bf{\tilde s}}}_i}} \right|^2}$ can be expressed as
\begin{equation}\label{term2}
\begin{split}
{\left| {{\bf{\tilde y}}_i^H{{{\bf{\tilde s}}}_i}} \right|^2} &\approx {\left| {\sqrt {\tilde p_i^{\left( n \right)}} {{ h}_{i,i}^*}{\bf{\tilde s}}_i^H{{{\bf{\tilde s}}}_i} + \sum\limits_{j = 1,j \ne i}^K {\sqrt {\tilde p_j^{\left( n \right)}} {{ h}_{j,i}^*}{\bf{\tilde s}}_j^H{{{\bf{\tilde s}}}_i}} } \right|^2}\\
& = {\left| {\sqrt {\tilde p_i^{\left( n \right)}} {{h}_{i,i}^*}} \right|^2} = \tilde p_i^{\left( n \right)}{\left| {{h_{i,i}}} \right|^2}.\\
\end{split}
\end{equation}
Substituting (\ref{term1}) and (\ref{term2}) into (\ref{term}), we can obtain that $\tilde a_i^{\left( n \right)} = sum\left( {{\tilde p}_j^{\left( n \right)}{{\left| {{h_{j,i}}} \right|}^2},j \in {\cal N}\left( i \right)} \right)$ in the interference limited network, where the noise power is much lower than the interference power. Since the operation $sum\left( {{\tilde p}_j^{\left( n \right)}{{\left| {{h_{j,i}}} \right|}^2},j \in {\cal N}\left( i \right)} \right)$ is a special case of $AGG\left( {{\bf m}_{j,i}^{\left( n \right)},j \in {\cal N}\left( i \right)} \right)$ in (\ref{agg}), it is permutation invariant.
\end{proof}
\begin{remark}
It is worth emphasizing that the $mean\left(\right)$ and $max\left(\right)$ aggregation functions can also be implemented in the proposed Air-MPNN framework. For the $mean\left(\right)$ function, each node needs to know the number of nodes in the network, which can be broadcast by a network manager. Then, the mean value is the ratio between the total interference power and the number of nodes. For the $max\left(\right)$ function, we assume that each node knows the pilot sequences codeword of all nodes prior to broadcasting, and all the pilot sequences are transmitted over the same time and frequency resource blocks. Since the pilot sequences are orthogonal, similar to (\ref{term2}), we have $ {\left| {{\bf{\tilde y}}_i^H{{{\bf{\tilde s}}}_k}} \right|^2}\approx\tilde p_k^{\left( n \right)}{\left| {{h_{k,i}}} \right|^2}$. Thus, the maximal interference power can be computed by $max\left( {{{\left| {{\bf{\tilde y}}_i^H{{{\bf{\tilde s}}}_k}} \right|}^2},k \in {\cal N}\left( i \right)} \right)$. We take summation as an example of the aggregation function in the implementation of Air-MPNN hereafter because of its simplicity.

In addition, it is worth noting that the proposed aggregation over-the-air technique in Proposition 1 is not only applicable to the GNN, but also applicable to the WMMSE algorithm, i.e., Algorithm 1 in \cite{DL1}, and we name it the Air-WMMSE. Similar to Proposition 1, during each iteration, line 7 and line 8 for Algorithm 1 in \cite{DL1} can be computed based on the received superimposed signal at each receiver, when all the transmitters broadcast orthogonal pilot sequences simultaneously. Since orthogonal pilot sequences are broadcast twice during each iteration, the signaling overhead can be high when the iteration number is large. To avoid high overhead in the Air-WMMSE, we can choose one iteration for practical implementation.
\end{remark}
\subsection{Signaling Overhead of Distributed Power Allocation via Air-MPNN}
The proposed Air-MPNN framework can be trained in a centralized manner with a similar procedure as the existing MPNN framework discussed below (\ref{problem2}) by replacing the MPNN kernel with the Air-MPNN kernel. We now analyze the overhead for implementing the Air-MPNN framework in a distributed manner. To compute (\ref{AirMPNNmessage})-(\ref{AirMPNNupdate}), each node needs to obtain the observation given by
\begin{equation}\label{observationAirMPNN}
{o_i^{\rm air-mpnn}} = \left\{ {{h_{i,i}},{w_i}, \left\{ {{{\tilde a}}_i^{\left( n \right)},\forall n} \right\}} \right\},
\end{equation}
where $\left\{{h_{i,i}},{w_i}\right\}$ is used to generate the pilot transmit power, and the aggregated message ${{\tilde a}}_i^{\left( n \right)}$ is used to update the local embedding for the $n$-th layer.
\subsubsection{Signaling Overhead for CSI Estimation}Since each node only requires the local CSI $h_{i,i}$, when there are $K$ nodes in the wireless network, the overhead for CSI estimation is given by ${\cal{O}}_{\rm CSI}^{\rm{air-mpnn}}\left(K\right)$.
\subsubsection{Signaling Overhead for Message Passing}According to Proposition 1 and Fig. \ref{fig_airMPNN}, all the nodes can obtain ${{\tilde a}}_i^{\left( n \right)}$ by broadcasting their pilot sequences in the same time-frequency resource block, and the Air-MPNN framework requires $N$ communication rounds for message passing. Since we assume the pilot sequences are orthogonal, the overhead is thus given by ${\cal{O}}_{\rm MP}^{\rm{air-mpnn}}\left(NK\right)$.

Because we only need to evaluate the total power of the interference links from the received pilots, the orthogonal pilot sequences for CSI estimation and message passing can be the same in Air-MPNN, its total signaling overhead is thus given by
\begin{equation}\label{AirMPNNoverhead}
N_{O}^{\rm air-mpnn} = K\delta_{\rm csi} + NK\delta_{\rm csi}=\left(N+1\right)K\delta_{\rm csi}.
\end{equation}
Compared with the existing MPNN framework, the signaling overhead is much lower since it grows linearly with network size $K$ and the number of layers $N$.

\section{Air-MPRNN Design for the Distributed Power Allocation}
Both the existing MPNN and the proposed Air-MPNN frameworks require to update the graph embedding $N$ times in order to obtain the transmit power for data transmission. It means that we need $N$ rounds of communications before data transmission. In addition, they need to perform channel estimation before message passing in order to obtain the local observation. The latency and signaling overhead are still unsatisfactory for some wireless networks. For example, the duration of each subframe in 5G New Radio is $1$~ms with a small number of symbols. It is not possible to transmit pilots multiple rounds before data transmission. To further reduce the latency and signaling overhead, we combine the Air-MPNN framework with RNN and develop an Air-MPRNN framework.
\subsection{Air-MPRNN Framework}
Considering that the frame duration of many wireless networks are smaller than the channel coherence time, e.g., 5G NR and Wi-Fi 802.11, the network state information across different frames is correlated, e.g., the CSI. Then, an interesting question arises: \textit{Can we utilize the graph embedding and CSI in the previous frame to update the graph embedding in the current frame?} If this is possible, we only need to update graph embedding once for each transmission frame. Meanwhile, we can combine the CSI estimation and message passing phases in the Air-MPNN framework, such that both the CSI and the aggregated message during the current frame can be evaluated based on the same received pilot signal. To achieve this, we propose the novel Air-MPRNN framework by exploring the temporal correlation of the wireless network. Inspired by the conventional RNN framework, we use the graph embedding in the previous frame as the hidden state in the RNN and pass it to the next frame for further graph convolution. Compared with MPNN and Air-MPNN with $N$ layers, the Air-MPRNN only has one layer and the graph embedding is updated once every frame. In the following, we remove the subscript $n$ for the GNN layer in designing the Air-MPRNN. Based on the graph convolution kernel of the Air-MPNN defined in (\ref{AirMPNNmessage})-(\ref{AirMPNNupdate}), we now formally define the recurrent graph convolution kernel of the proposed Air-MPRNN framework by
\begin{equation}\label{AirMPRNNmessage}
\begin{split}
\text{Pilot transmit power:}\quad {\tilde p}_i{\left(t\right)}= \Phi \left( {{\bf e} _i{\left( t-1 \right)},{{\bf{z}}_i\left( t-1 \right)};\theta } \right),
\end{split}
\end{equation}
\begin{equation}\label{AirMPRNNmessageagg}
\begin{split}
&\text{Air message passing and aggregation:}\\
&\tilde a_i{\left( t \right)} = sum\left( {{\tilde p}_j{\left(t\right)}{{\left| {{h_{j,i}}\left(t\right)} \right|}^2},j \in {\cal N}\left( i \right)} \right),
\end{split}
\end{equation}
\begin{equation}\label{AirMPRNNupdate}
\text{Hidden state update:}\quad {\bf e} _i{\left( t \right)} = U\left( {{\bf e} _i{\left( t-1 \right)},\tilde a_i{\left( t \right)},{{\bf{z}}_i\left( t \right)};\phi } \right),
\end{equation}
\begin{equation}\label{AirMPRNNoutput}
\text{Output:}\quad {p_i}\left( {t } \right) =  {\Omega \left( {{\bf e} _i{\left( t \right)};\varphi } \right)} .
\end{equation}
\begin{figure*}[!t]
	\centerline{\includegraphics[width=0.8\textwidth]{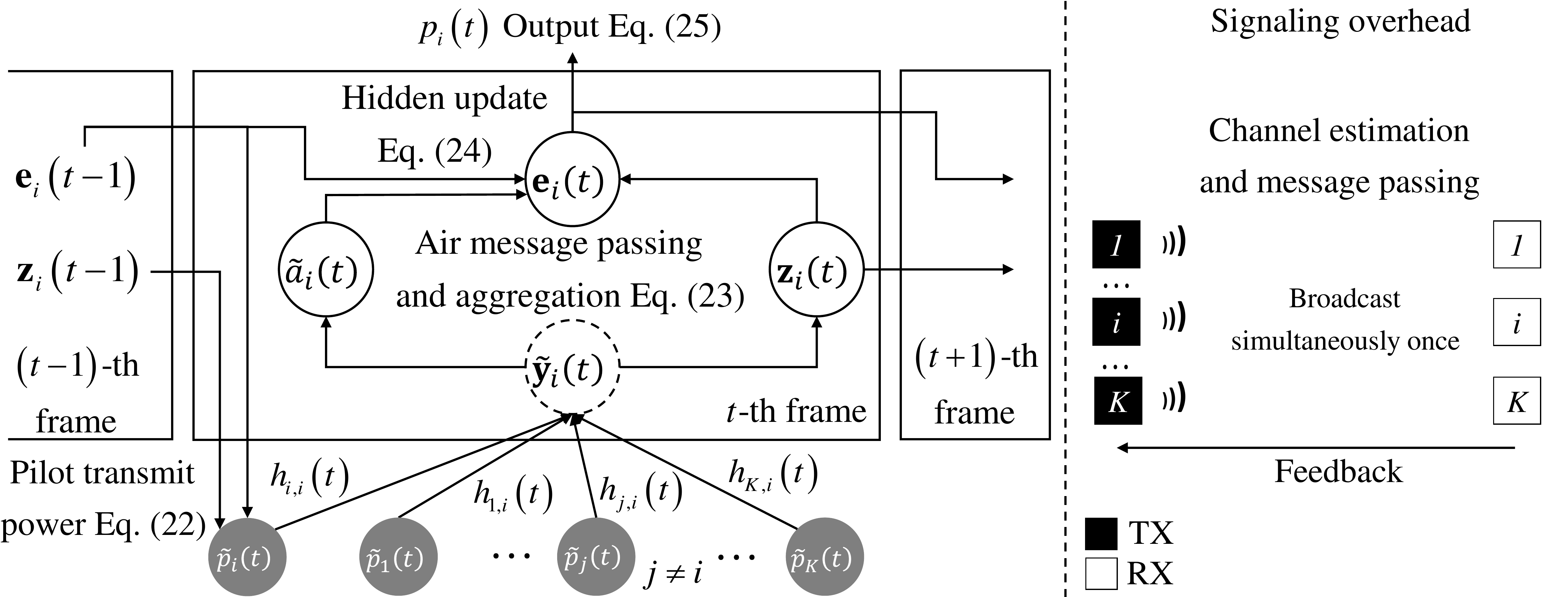}}
	\caption{The illustration of the proposed Air-MPRNN framework.}
	\label{fig_airMPRNN}
\vspace*{4pt}
\end{figure*}

For a better understanding, we depict an illustration of the recurrent graph convolution kernel of the proposed Air-MPRNN framework in Fig. \ref{fig_airMPRNN} on top of the next page. In contrast to the proposed Air-MPNN framework shown Fig. \ref{fig_airMPNN}, where the nodes need to broadcast pilot multiple times to compute the graph convolution, the proposed Air-MPRNN framework only requires all the nodes to broadcast pilot simultaneously once. To be more specific, at the beginning of the $t$-th frame, the $i$-th node takes the local state information ${\bf{z}}_i \left(t-1\right) = \left[ {\left|{h_{i,i}\left(t-1\right)}\right|^2,{w_i}} \right]$ and the graph embedding, i.e., hidden state, ${\bf e} _i{\left( t-1 \right)}$ updated in the previous frame to generate the pilot transmit power ${\tilde p}_i{\left(t\right)}$ according to (\ref{AirMPRNNmessage}). After that, all the nodes broadcast pilot sequences with a transmit power ${{\tilde p}_i\left(t\right)}$ and receive the signal ${{{\bf{\tilde y}}}_i}\left(t\right)$ given by
\begin{equation}\label{signalMPRNN}
\tilde {\bf y}_i\left(t\right) = \sqrt {{\tilde p}_i\left(t\right)} {h_{i,i}\left(t\right)}\tilde {\bf s}_i + \sum\limits_{j = 1,j \ne i}^K {\sqrt {{\tilde p}_j\left(t\right)} {h_{j,i}\left(t\right)}\tilde {\bf s}_j}  + {{\bf n}_i}\left(t\right).
\end{equation}
Each node can then utilize the received signal ${{{\bf{\tilde y}}}_i}\left(t\right)$ to obtain ${\bf{z}}_i \left(t\right)$ and $\tilde a_i{\left( t \right)}$ by the following proposition.
\begin{proposition}
In the Air-MPRNN framework, when the pilot sequences are orthogonal, i.e., ${\bf{\tilde s}}_i^H{{{\bf{\tilde s}}}_j} = 0,\forall i\ne j$, both the local state ${\bf{z}}_i \left(t\right)$, i.e., $\left|{h_{i,i}\left(t\right)}\right|^2$, and the aggregated message $\tilde a_i{\left( t \right)}$ in the $t$-th frame can be obtained from $\tilde {\bf y}_i\left(t\right)$ by the following equations, and the operation is permutation invariant,
\begin{equation}\label{term11}
\left|h_{i,i}\left(t\right)\right|^2 \approx {\left| {{\bf{\tilde y}}_i^H\left(t\right){{{\bf{\tilde s}}}_i}} \right|^2}/{{\tilde p}_i{\left( t \right)}},
\end{equation}
\begin{equation}\label{term22}
\tilde a_i{\left( t \right)} \approx {\left\| \tilde {\bf y}_i\left(t\right) \right\|^2} - {\left\| \tilde {\bf y}_i^H\left(t\right)\tilde {\bf s}_i \right\|^2}.
\end{equation}
\end{proposition}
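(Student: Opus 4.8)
The plan is to mirror the argument of Proposition 1 almost verbatim, since the received signal model in~\eqref{signalMPRNN} has exactly the same structure as~\eqref{signalMPNN} with the per-layer superscript $(n)$ replaced by the per-frame index $(t)$ and with the channel coefficients now carrying the frame index. First I would write the matched filter output $\tilde{\bf y}_i^H(t)\tilde{\bf s}_i$ by substituting~\eqref{signalMPRNN}, and use orthogonality of the pilot sequences, ${\bf{\tilde s}}_i^H{{{\bf{\tilde s}}}_j}=0$ for $i\ne j$, together with the unit-power normalization ${\bf{\tilde s}}_i^H{{{\bf{\tilde s}}}_i}=1$, to kill every cross term. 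What survives is $\sqrt{{\tilde p}_i(t)}\,h_{i,i}^*(t)$ plus a noise contribution $\tilde{\bf y}$-projection; neglecting the noise (the same interference-limited approximation justified after~\eqref{term1}), taking the squared magnitude, and dividing by ${\tilde p}_i(t)$ yields~\eqref{term11}. Note ${\tilde p}_i(t)$ is known locally because it was computed by the node itself via~\eqref{AirMPRNNmessage}.

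Next I would compute $\|\tilde{\bf y}_i(t)\|^2=\tilde{\bf y}_i^H(t)\tilde{\bf y}_i(t)$. Expanding the product of the two sums in~\eqref{signalMPRNN}, dropping the noise cross terms and noise power (approximation (a) in the proof of Proposition~1), and again invoking pilot orthogonality to retain only the diagonal terms ${\bf{\tilde s}}_j^H{\bf{\tilde s}}_j=1$, gives
\begin{equation}\label{term1MPRNN}
{\left\| {{{{\bf{\tilde y}}}_i}\left(t\right)} \right\|^2} \approx {\tilde p}_i\left(t\right){\left| {{h_{i,i}}\left(t\right)} \right|^2} + sum\left( {{\tilde p}_j\left(t\right){{\left| {{h_{j,i}}\left(t\right)} \right|}^2},j \in {\cal N}\left( i \right)} \right).
\end{equation}
Since $\bigl\| \tilde{\bf y}_i^H(t)\tilde{\bf s}_i \bigr\|^2 = {\tilde p}_i(t)\,|h_{i,i}(t)|^2$ by the first part, subtracting it from~\eqref{term1MPRNN} cancels the direct-link term and leaves exactly $\tilde a_i(t)=sum\left( {{\tilde p}_j\left(t\right){{\left| {{h_{j,i}}\left(t\right)} \right|}^2},j \in {\cal N}\left( i \right)} \right)$, which is~\eqref{term22}. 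Permutation invariance then follows immediately, as in Proposition~1, because this $sum\left(\cdot\right)$ is a special case of the aggregation function $AGG\left(\cdot\right)$ in~\eqref{agg} over the neighbor set.

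The only genuinely new point compared with Proposition~1 is establishing that the two quantities can be extracted from a \emph{single} received pilot vector $\tilde{\bf y}_i(t)$ simultaneously, rather than requiring separate pilot transmissions for CSI estimation and for aggregation; this is what enables combining the two phases in Air-MPRNN. That is not really an obstacle, since~\eqref{term11} uses the projection $\tilde{\bf y}_i^H(t)\tilde{\bf s}_i$ while~\eqref{term22} uses the total energy $\|\tilde{\bf y}_i(t)\|^2$ minus that same projection's energy, so both are deterministic functions of the one observation $\tilde{\bf y}_i(t)$ and the locally known ${\tilde p}_i(t)$ and ${\bf{\tilde s}}_i$. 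I would state this explicitly at the end. The mildest subtlety to flag is the accuracy of the noise-neglecting approximation: I would reuse the justification already given after~\eqref{term1} that in dense D2D networks interference dominates noise, and remark that any residual noise in~\eqref{term11} only perturbs the channel-gain estimate by the usual estimation error, which is absorbed into the learned model during training.
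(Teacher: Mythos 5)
Your proposal is correct and follows essentially the same route as the paper's own proof, which likewise reduces Proposition 2 to the computations in Proposition 1 (the projection $\tilde{\bf y}_i^H(t)\tilde{\bf s}_i$ for the direct-link gain, the energy difference for the aggregated interference, and the observation that the $sum$ over neighbors is a permutation-invariant special case of $AGG$). Your closing remark that both quantities are extracted from a single received pilot—possible because $\tilde p_i(t)$ is generated from the \emph{previous} frame's state and so needs no current CSI—is exactly the point the paper's proof emphasizes as the only genuinely new ingredient.
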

\begin{proof}
It is worth noting that different from the Air-MPNN, the proposed Air-MPRNN uses the local state in the previous frame to generate the pilot transmit power ${{\tilde p}_i\left(t\right)}$ in the current frame. Therefore, each node in Air-MPRNN does not need the CSI $h_{i,i}\left(t\right)$ in the current frame to determine ${{\tilde p}_i\left(t\right)}$ and can receive $\tilde {\bf y}_i\left(t\right)$ directly. (\ref{term11}) can thus be obtained based on (\ref{term2}). Like Proposition 1, we can derive that ${\left\| \tilde {\bf y}_i\left(t\right) \right\|^2} - {\left\| \tilde {\bf y}_i^H\left(t\right)\tilde {\bf s}_i \right\|^2} \approx SUM\left( {{\tilde p}_j{\left(t\right)}{{\left| {{h_{j,i}}\left(t\right)} \right|}^2},j \in {\cal N}\left( i \right)} \right)$ and it is permutation invariant. The proof follows.
\end{proof}

With Proposition 2, each node can update its local embedding according to (\ref{AirMPRNNupdate}), saves the updated embedding ${\bf e} _i{\left( t \right)}$ and the obtained local state ${\bf{z}}_i \left(t\right)$ for the next frame. Finally, the RNN unit outputs the transmit power according to (\ref{AirMPRNNoutput}). Compared with the graph convolution kernel of the Air-MPNN framework given in (\ref{AirMPNNmessage})-(\ref{AirMPNNupdate}), the recurrent graph convolution kernel of the Air-MPRNN framework utilizes the graph embedding and CSI in the previous frame to update embeddings recurrently. In this way, all the nodes only require to transmit the pilot signal once during each frame.
\subsection{Signaling Overhead of Distributed Power Allocation via Air-MPRNN}
The above power allocation policy is trained in a centralized manner. We now define the observation of each node if the policy is executed in a distributed manner. As shown in Fig.~\ref{fig_airMPRNN}, to compute (\ref{AirMPRNNmessage})-(\ref{AirMPRNNupdate}), the $i$-th node only needs to obtain the following observation
\begin{equation}\label{observationAirMPNN}
{o_i}\left( t \right) = \left\{ {{{{\bf{\tilde y}}}_i}\left( t \right),{w_i}\left( t \right)} \right\}.
\end{equation}
Since we need to assign orthogonal pilots to the $K$ D2D pairs to both evaluate CSI and the aggregated messages from ${{{\bf{\tilde y}}}_i}\left( t \right)$, the overall signaling overhead for the proposed Air-MPRNN model can be expressed as
\begin{equation}\label{AirMPRNNoverhead}
N_{O}^{\rm air-mprnn} = K\delta_{\rm csi}.
\end{equation}
\begin{remark}
The Air-MPRNN can be potentially integrated into the existing wireless networks without changing the frame structures in the standards, such as 5G NR and Wi-Fi 802.11 systems. By implementing the message, update, and output MLPs at each transmitter, the transmitters can adjust the transmit power of the pilot sequence according to (\ref{AirMPRNNmessage}) and broadcast the pilot simultaneously. Each receiver can update $\tilde a_i{\left( t \right)}$ and ${{\bf{z}}_i\left( t \right)}$ by Proposition 2 based on the received superimposed signal. By feeding back $\tilde a_i{\left( t \right)}$ and ${{\bf{z}}_i\left( t \right)}$ to the transmitter via CSI feedback, each transmitter can update its hidden state according to (\ref{AirMPRNNupdate}), and determine the frame transmit power by (\ref{AirMPRNNoutput}). Different from AirComp systems, the aggregation over-the-air mechanism in Proposition 2 only requires to evaluate the summation of the interference power and there is no need to obtain the signal from each link. Furthermore, the synchronization of all the transmitters in the considered D2D network can be achieved by synchronizing them with a base station as specified in the existing 5G NR \cite{sync}. Therefore, the symbol-level synchronization is not hard to achieve in our system.
\end{remark}

\section{Simulation Results}
In this section, we evaluate the performance of the proposed Air-MPNN and Air-MPRNN frameworks, and compare them with the existing MPNN, WMMSE, and EPA algorithms.
\subsection{Simulation Setup}
\subsubsection{Layout Generation}To generate the training data, we consider a $500$ m $\times$ $500$ m layout, where $20$ D2D pairs are randomly located in this area. We define the link density as $\beta  = {{K} \over {l^2}}$, where $K$ is the number of D2D pairs and $l$ is the field length (m). In addition, in order to capture different link densities, we define the link density factor as the ratio of link densities between the training stage and the testing, i.e., $\gamma  = {{{\beta _{{\rm{train}}}}} \over {{\beta _{{\rm{test}}}}}}$. In Subsections C-E, we set the density factor to one and illustrate the impact of other parameters on the performance of different policies. In Subsection F, we will investigate the impact of the link density factor.
\subsubsection{CSI Generation}
\begin{table*}[t!]
\scriptsize
\centering
\caption{System parameters for simulation setup}\label{wireless network_parameters}
 \begin{tabular}{||c | c || c |c||}
 \hline
 Parameters & Values  &Parameters &Values\\ [0.5ex]
 \hline\hline
 Bandwidth & 5 M  & Transmit antenna power gain & 2.5 dBi\\
 \hline
 Carrier frequency & 2.4 GHz & Maximum transmit power & 40 dBm \\
 \hline
 Transmit antenna height & 1.5 m  & Receive antenna height & 1.5 m \\
 \hline
 D2D link density & $8 \times 10^{-5}$ links/m$^2$ &  Channel correlation coefficient & [0 1) \\
 \hline
 Number of training layouts & 2000 & Number of testing layouts & 500 \\
 \hline
 Number of frames & 10 & Number of symbols in frame & 3000\\
 \hline
 Number of nodes & 20 & Noise power & -169 dBm/Hz\\
 \hline
CSI estimation overhead $\delta_{\rm csi}$ & 1 symbol & Message passing overhead $\delta_{\rm mp}$  & 5 symbols \\[1ex]
\hline
 \end{tabular}
\end{table*}
Based on the generated layout, we then generate the channel coefficients and channel power gains of the wireless links. To capture the path-loss, we consider the following model in dB for the ultra high frequency (UHF) propagation given by \cite{path-loss}
\begin{equation}\label{path_loss}
{L_{pl}} = {L_{bp}} + 6 + \left\{ {
\begin{matrix}
\begin{split}
   &{20{{\log }_{10}}\left( {{d \over {{R_{bp}}}}} \right),{\rm{ for }}\quad d \le {R_{bp}}}  \\
   &{40{{\log }_{10}}\left( {{d \over {{R_{bp}}}}} \right),{\rm{ for }}\quad d > {R_{bp}}}  \\
\end{split}
\end{matrix} } \right.,
\end{equation}
where $d$ (m) is the distance between the transmitter and the receiver, ${R_{bp}} = 4{h_1}{h_2}/\lambda $ is the breakpoint distance, $h_1$, $h_2$ are the heights of the transmission and receiving antennas, and $\lambda$ is the wavelength. ${L_{bp}} = \left| {20{{\log }_{10}}\left( {{{{\lambda ^2}} \over {8\pi {h_1}{h_2}}}} \right)} \right|$ in (\ref{path_loss}) is the basic path-loss at the breakpoint \cite{path-loss}.

For each layout, we let $g_{ls}$ denote the large-scale fading path gain obtained from the path-loss model (\ref{path_loss}), and let $h_{ss}\left(t\right)$ represent the small-scale channel fading coefficient. To capture the channel correlation, the channel gain of each link evolves according to \cite{channelmodel}
\begin{equation}\label{smallscale1}
h_{ss}\left(t+1\right) = \rho h_{ss}\left(t\right)+\beta\left(t\right),
\end{equation}
\begin{equation}\label{smallscale2}
\left|h\left(t\right)\right|^2 = g_{ls}\left|h_{ss}\left(t\right)\right|^2,
\end{equation}
where $\beta\left(t\right) \sim {\cal{CN}}\left(0,1-\rho^2\right)$, and $\rho$ is the channel correlation coefficient. Note that $\rho = 0$ corresponds to the special case that the small-scale channel fading coefficient changes independently across different frames. We uniformly choose $\rho \in \left[0\right.\left.,1\right)$ for each layout in generating all the training and testing data sets. For the testing data set, we choose the same system parameters as the training data set except for the number of layouts. All the system parameters are summarized in Table \ref{wireless network_parameters} on the top of this page, unless otherwise specified.
\subsubsection{Normalization}
We next discuss the data normalization for the considered frameworks. In the existing MPNN, both the local CSI of each D2D link, and the CSI of each interference link can be normalized by $\tilde{x} = \left( {x - \overline x } \right)/\widehat x$, where $x$ is the training data, $\overline x$ is the average value of the data, and $\widehat x$ is the standard deviation of the data. However, in the proposed Air-MPNN and Air-MPRNN frameworks, we need to normalize the aggregated interference power, i.e., $\tilde \alpha _i^{\left( n \right)}$ in (\ref{AirMPNNmessageagg}) and $\tilde \alpha _i^{\left( t \right)} $ in (\ref{AirMPRNNmessageagg}). The problem is that we do not have samples for such data and its value varies when the weights of the GNN change. In our simulation, we adopt constant values $\overline  m$ and $\widehat m$ to normalize $\tilde \alpha _i^{\left( n \right)}$ and $\tilde \alpha _i^{\left( t \right)}$ by $\tilde{x} = \left( {x - \overline m} \right)/\widehat m$, where $x$ is the data, $\bar m$, $\widehat m$ are the mean value and the standard deviation of the channel power gain of the interference links obtained from the training data set, respectively.
\subsection{GNN Architecture \& Training}
During each training iteration, a batch of $50$ samples are randomly selected from the training data set, and fed to the GNN frameworks to update their weights in an unsupervised manner according to the loss function defined as the negative of the objective function. We set the weights for all the nodes equal to 1, i.e., $w_k = 1$, $k=1,2,\cdots,K$ and consider the sum-rate performance hereafter. We tried different hyper-parameters and select the best one by trial and error. The dimension of graph embedding of each node in all the GNN-based frameworks is set to $8$. Recall that all the GNN-based frameworks contain a message MLP $\Phi \left( {\cdot;\theta } \right)$, an update MLP $U \left( {\cdot;\phi } \right)$ and an output MLP ${\Omega \left( {\cdot;\varphi } \right)}$, we list their structures in Table \ref{learning_parameters} on the top of next page. It is worth noting that the output dimension of the message MLPs in Air-MPNN and Air-MPRNN is set to 1 since it is represented by the pilot transmit power. Differently, the output dimension of the message MLP in MPNN is set to 32 for a richer representation.
\begin{table*}[t!]
\scriptsize
\centering
\caption{Hyper-Parameters for MPNN, Air-MPNN and Air-MPRNN.}\label{learning_parameters}
 \begin{tabular}{||c | c|| c | c ||}
 \hline
 Hyper-Parameters & Values & Hyper-Parameters & Values \\ [0.5ex]
 \hline\hline
 Initial learning rate & 0.002 & Learning rate decay factor& 0.9\\
 \hline
 Batch size & 50 & Dimension of graph embedding & 8  \\
 \hline
 Optimizer & Adam \cite{ADAM} & Structure of Output MLP & $\left\{8,16,1\right\}$  \\
 \hline
 \multirow{3}*{Graph convolution layers} & MPNN: 3& \multirow{3}*{Model parameters} & MPNN: 2377\\
 \cline{2-2}\cline{4-4}
 ~ &Air-MPNN: 3 & ~& Air-MPNN: 1882    \\
 \cline{2-2}\cline{4-4}
 ~ &Air-MPRNN: 1& ~& Air-MPRNN: 2258 \\
 \hline
 \multirow{3}*{Structure of Message MLP} & MPNN: $\left\{10,32,32\right\}$ & \multirow{3}*{Structure of Update MLP}& MPNN :$\left\{41,16,8\right\}$\\
 \cline{2-2}\cline{4-4}
 ~ &Air-MPNN: $\left\{9,32,32,1\right\}$ & ~& Air-MPNN: $\left\{10,16,8\right\}$    \\
 \cline{2-2}\cline{4-4}
 ~ &Air-MPRNN: $\left\{9,32, 32, 1\right\}$ & ~& Air-MPRNN: $\left\{10,32,8\right\}$ \\ [1ex]
 \hline
 \end{tabular}
\end{table*}
\begin{figure}[!h]
	\centerline{\includegraphics[width=0.45\textwidth]{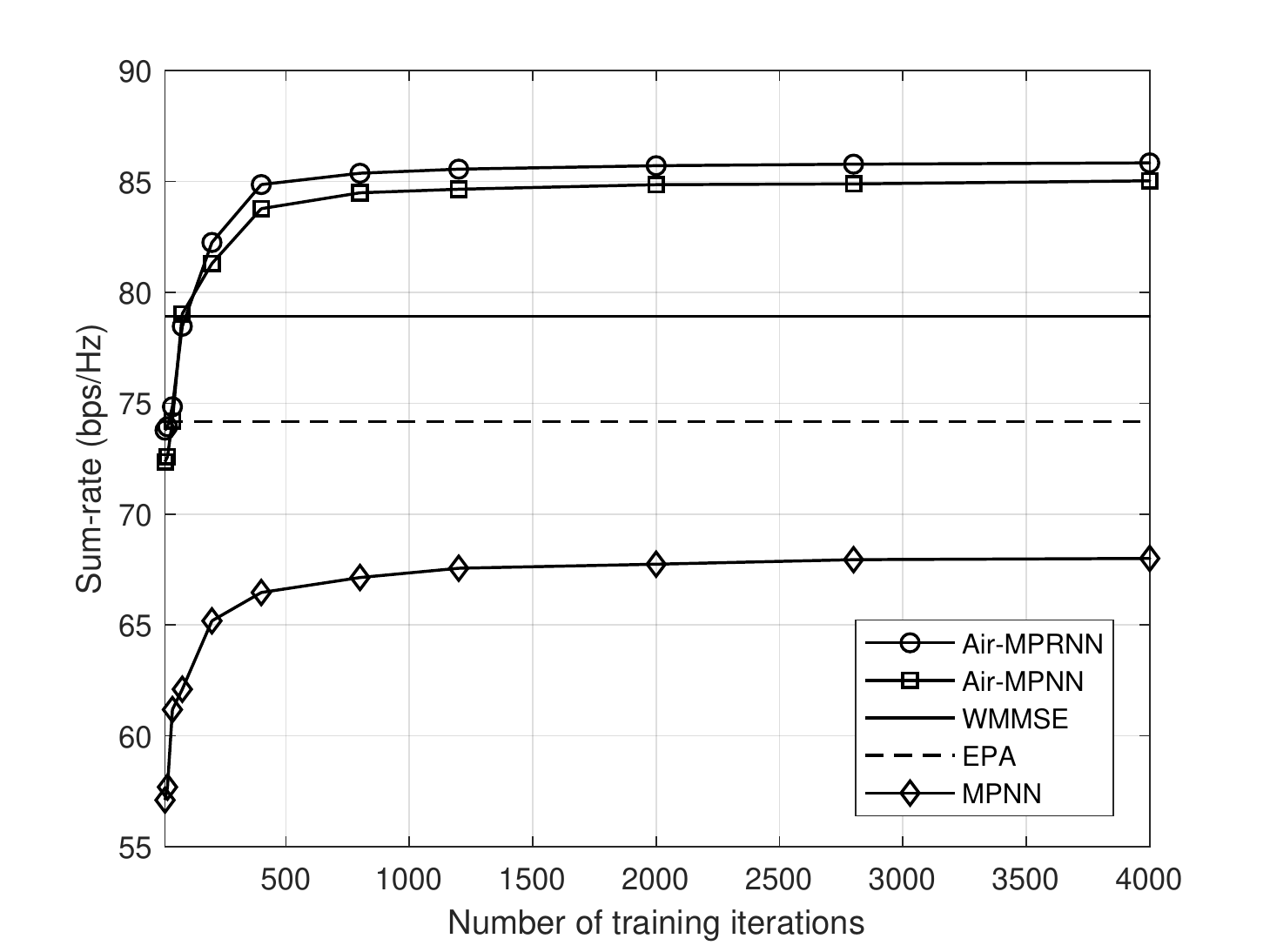}}
	\caption{The sum-rate of the MPNN, Air-MPNN and Air-MPRNN frameworks during training.}
	\label{fig_training_performance}
\end{figure}

In Fig. \ref{fig_training_performance}, we first depict the performance of the considered frameworks during training, by evaluating their sum-rate on the same validation data set against different number of iterations. For comparison, we also show the sum-rate of the conventional EPA and WMMSE algorithms. In EPA, we assume that all the nodes select the maximum transmit power for data transmission. In WMMSE, the transmit power is obtained after 100 iterations based on the global CSI. In EPA, there is no signaling overhead. With WMMSE, we need to estimate CSI of all the $K^2$ links and thus the signaling overhead is $K^2\delta_{\rm csi}$. From Fig. \ref{fig_training_performance}, we can observe that the proposed Air-MPNN and Air-MPRNN frameworks can outperform the existing EPA and WMMSE algorithms after several hundreds of training iterations. Furthermore, we can see that the existing MPNN performs poorly due to the signaling overhead for CSI estimation and message passing. As shown in Table~\ref{wireless network_parameters}, the CSI estimation and message passing overheads are $\delta_{\rm csi} = 1$ and $\delta_{\rm mp} = 5$ symbols. A total number of $N_{O}^{\rm mpnn} = K^2\delta_{\rm csi} + NK\delta_{\rm mp} = 700$ symbols are required to obtain the observation for MPNN. There are 3000 symbols in each frame, and hence the overhead is not negligible. This observation implies that the existing MPNN framework is not suitable for distributed execution. Because the sum-rate of the considered frameworks can converge with 2000 iterations, we thus set the number of training iterations to 2000 for all the frameworks. We perform the training and testing of the GNN-based frameworks on a Laptop with NVIDIA GeForce RTX 3050 Ti Laptop GPU. Because the WMMSE algorithm has a sequential computation flow, it is running on the Intel(R) Core(TM) i7-11370H CPU (3.30Hz) on the same laptop. The training time, sum-rate, and the signaling overhead ratio $N_O \over N_S$ for all the algorithms are shown in Table~\ref{Time_Comparison}. From Table~\ref{Time_Comparison}, we can see that the training time for all the considered three GNN-based frameworks is less than 1 min. After the training phase, the Air-MPRNN framework achieves the highest sum-rate, followed by the Air-MPNN. They both outperforms the conventional EPA and WMMSE algorithms. The existing MPNN achieves the lowest sum-rate due to its high signaling overhead.
\begin{table}[h!]
\scriptsize
\centering
\caption{Comparison of training for the considered algorithms.}\label{Time_Comparison}
 \begin{tabular}{||c | c| c | c  ||}
 \hline
 Algorithm & Training time (s) & Sum-rate (bps/Hz) & Overhead ratio ${ N_O} \over {N_S}$\\ [0.5ex]
 \hline\hline
 EPA & - & 74.17 & 0\\
 \hline
 WMMSE &-& 78.88 & 13.3\%\\
 \hline
 MPNN & 14.84 & 67.26 & 23.3\%  \\
 \hline
 Air-MPNN & 16.41 & 84.80  & 2.7\%  \\
 \hline
 Air-MPRNN & 54.53 & 85.76  & 0.7\%  \\[1ex]
 \hline
 \end{tabular}
\end{table}
\subsection{Impact of signaling overhead}
\begin{figure}[!h]
	\centerline{\includegraphics[width=0.45\textwidth]{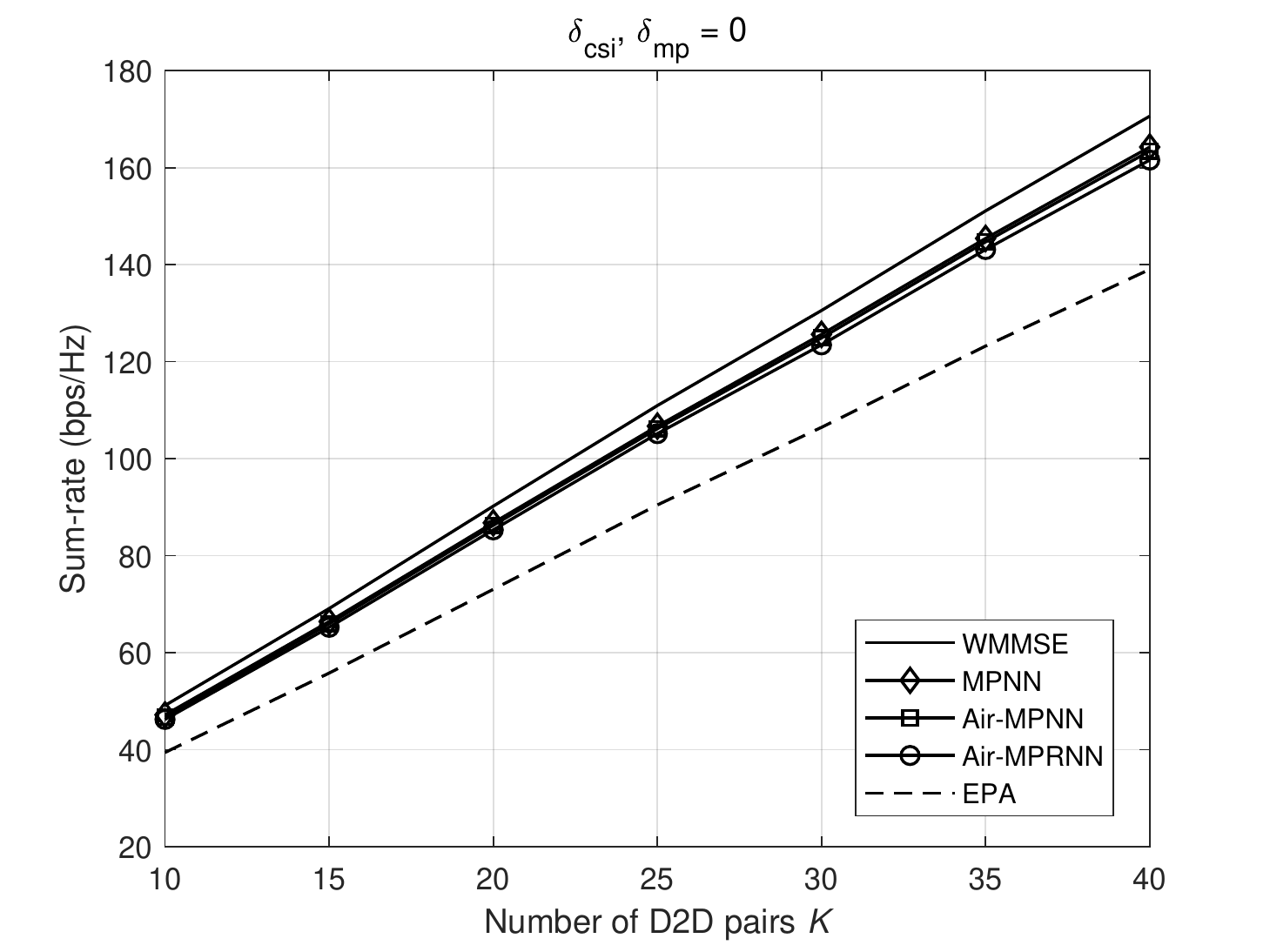}}
	\caption{The sum-rate versus the number of D2D pairs where the signaling overhead is assumed to be zero.}
	\label{fig_test_performance1}
\end{figure}
In Fig. \ref{fig_test_performance1}, we plot the sum-rate of the network versus the number of D2D pairs, where the signaling overhead is assumed to be zero, i.e., $\delta_{\rm csi}, \delta_{\rm mp} = 0$. It can be seen that under this assumption, the WMMSE algorithm achieves the highest sum-rate, while the EPA achieves the lowest sum-rate, respectively. All the three GNN-based frameworks can achieve near-optimal sum-rate compared with the WMMSE algorithm.
\begin{figure}[!h]
	\centerline{\includegraphics[width=0.45\textwidth]{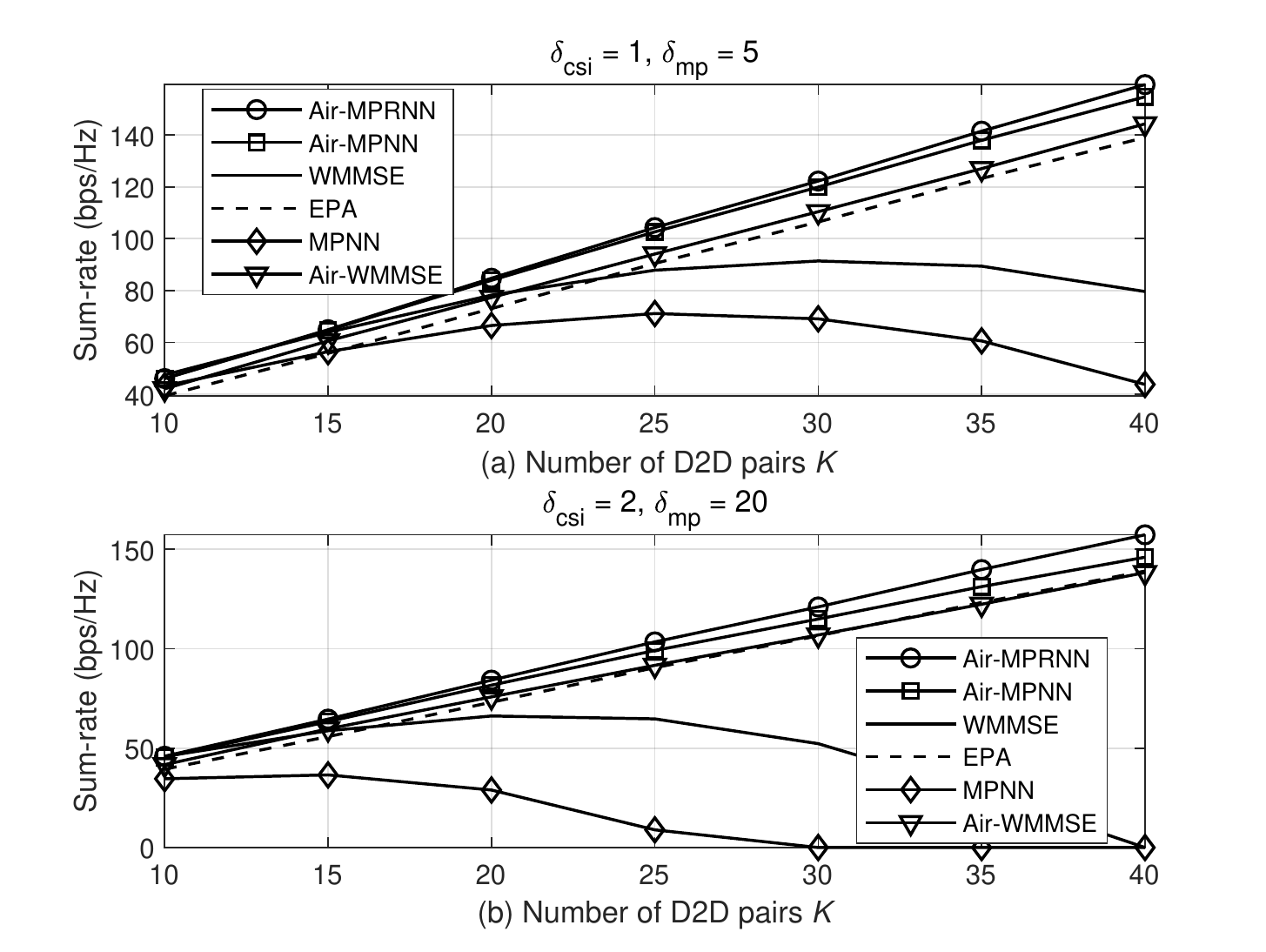}}
	\caption{The sum-rate versus the number of D2D pairs.}
	\label{fig_test_performance2}
\end{figure}

In Fig. \ref{fig_test_performance2}, we depict the relationship between sum-rate and the number of D2D pairs with different signaling overhead parameters. Recall that in Sec. IV-A, we have discussed the implementation of the proposed message passing over-the-air technique in WMMSE policy, namely the Air-WMMSE policy. The total signaling overhead of Air-WMMSE with one iteration is given by $3K\delta_{\rm csi}$, where $K\delta_{\rm csi}$ is the overhead for channel estimation of local D2D links, and $2K\delta_{\rm csi}$ is the overhead for broadcasting pilot twice during one iteration. The results in Fig. \ref{fig_test_performance2}(a) show that when $\delta_{\rm csi} = 1$, $\delta_{\rm mp} = 5$, the sum-rate of the WMMSE and MPNN firstly increases as the network size grows, then decreases when the number of D2D pairs is high. This is because that they both require the CSI of the D2D links and the interference links. The results in Fig. \ref{fig_test_performance2}(b) show that when $\delta_{\rm csi} = 2$, $\delta_{\rm mp} = 20$, the existing MPNN achieves zero sum-rate when there are 30 D2D pairs. When $K\ge30$, the symbols required for CSI estimation and message passing is larger than the total number of symbols in one frame. From the results in Fig. \ref{fig_test_performance2}, we can see that the proposed Air-MPNN and Air-MPRNN frameworks scale with the network size. Therefore, the proposed Air-MPNN and Air-MPRNN frameworks are suitable for large-scale wireless networks, where the existing MPNN and WMMSE may fail. Finally, the sum-rate of the Air-WMMSE is significantly higher than the existing WMMSE, which reveals the effectiveness of our aggregation over-the-air mechanism.
\begin{table*}[t!]
\scriptsize
\centering
\caption{The impact of channel correlation coefficient $\rho$ on the sum-rate of Air-MPRNN.}\label{channel_correlation}
 \begin{tabular}{||c | c| c | c |c |c|c |c||}
 \hline
 Channel correlation coefficient & 0 & 0.2 & 0.4 & 0.6 & 0.8 & 0.9 & 0.99\\ [0.5ex]
 \hline
 Normalized sum-rate of Air-MPRNN & 101.78\% & 101.84\% & 101.85\% & 101.95\% & 102.14\%& 102.34\% &102.80\%  \\[1ex]
 \hline
 \end{tabular}
\end{table*}
\subsection{Impact of frame duration}
\begin{figure}[!h]
	\centerline{\includegraphics[width=0.45\textwidth]{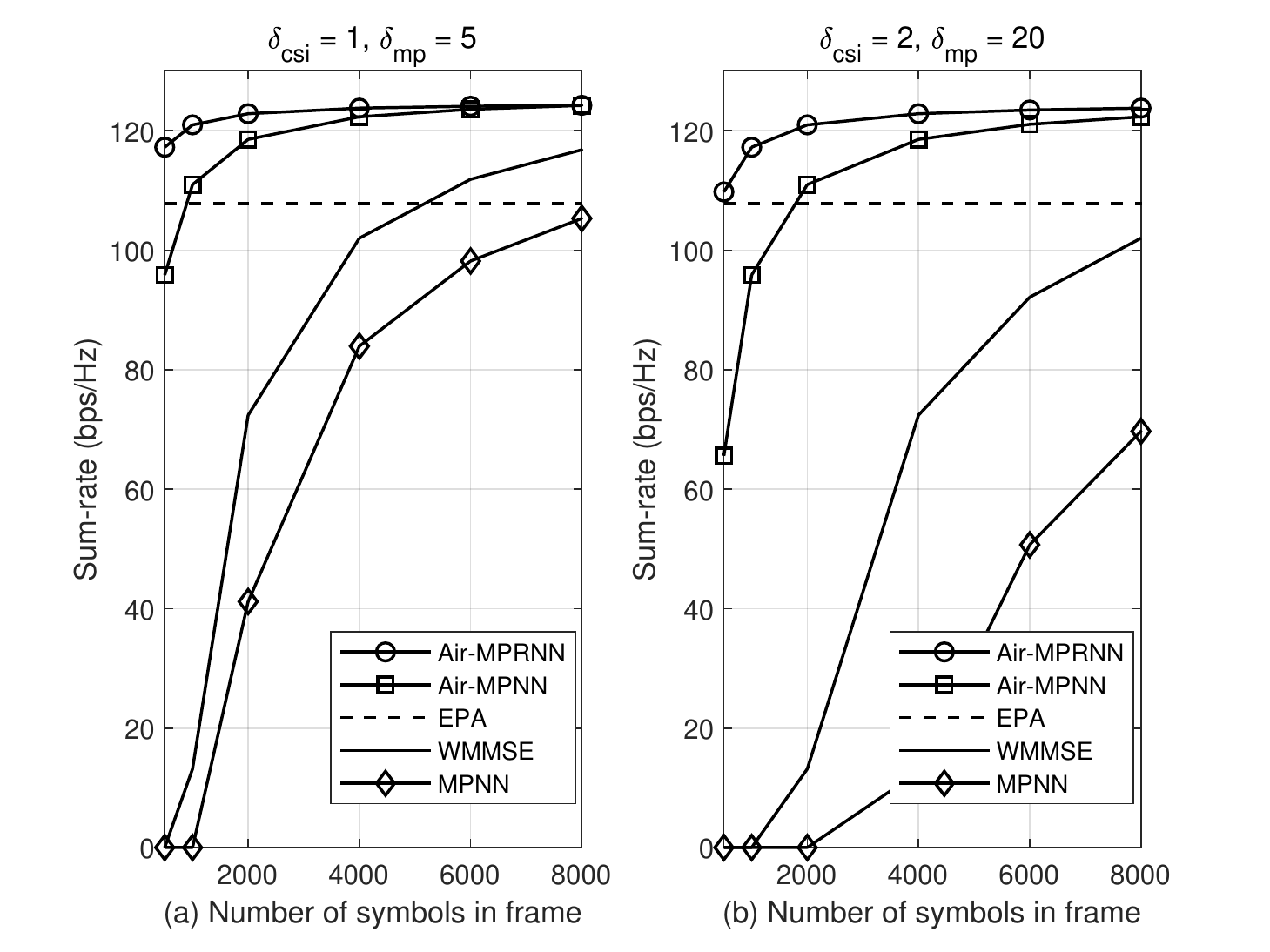}}
	\caption{The sum-rate versus the number of symbols in one frame, $N_S$, where the number of D2D pairs is $K =30$.}
	\label{fig_test_performance3}
\end{figure}
We next examine the impact of frame duration on the sum-rate, which is proportional to the number of symbols in the frame. We observe from Fig. \ref{fig_test_performance3} that as the frame duration increases, the sum-rate of EPA algorithm remains the same because it has no signaling overhead. The sum-rate of all the other algorithms increase as the frame duration increases since the overhead ratio $N_O \over N_S$ decreases. In addition, we can see that both the WMMSE and MPNN achieve poor sum-rate when $N_S$ is small, e.g., below 4000 symbols. Different from all the other GNN frameworks, the Air-MPRNN framework can achieve higher sum-rate than EPA when the frame duration is extremely short, e.g., a few hundreds symbols. This is because that the proposed recurrent graph convolution kernel only requires to broadcast message once during each frame, such unique feature becomes significant when the frame duration is very limited.
\subsection{Impact of channel correlation coefficient}
Since Air-MPRNN relies on the temporal correlation of network states, we examine the impact of correlation coefficient, $\rho$, on the sum-rate of Air-MPRNN. We set fixed values of $\rho$ for each layout in generating the testing data in this subsection rather than randomly choose $\rho$ between 0 and 1 in other subsections. The sum-rate of Air-MPRNN is normalized by the sum-rate of Air-MPNN. We choose the same parameters as that in Fig. \ref{fig_test_performance2}(a) where $K=30$. From Table \ref{channel_correlation} on top of this page, we can observe that as the channel correlation coefficient increases, the normalized sum-rate of Air-MPRNN slightly increases. It is surprising that the Air-MPRNN framework can achieve good performance even when $\rho=0$. This is because that even when the small-scale channel fading coefficient is not correlated and the Air-MPRNN utilizes the local CSI in the previous frame to update the graph embedding, it can still retain a good performance by aggregating the CSI of the interference links in the current frame.
\subsection{Impact of link density}
\begin{figure}[!h]
	\centerline{\includegraphics[width=0.45\textwidth]{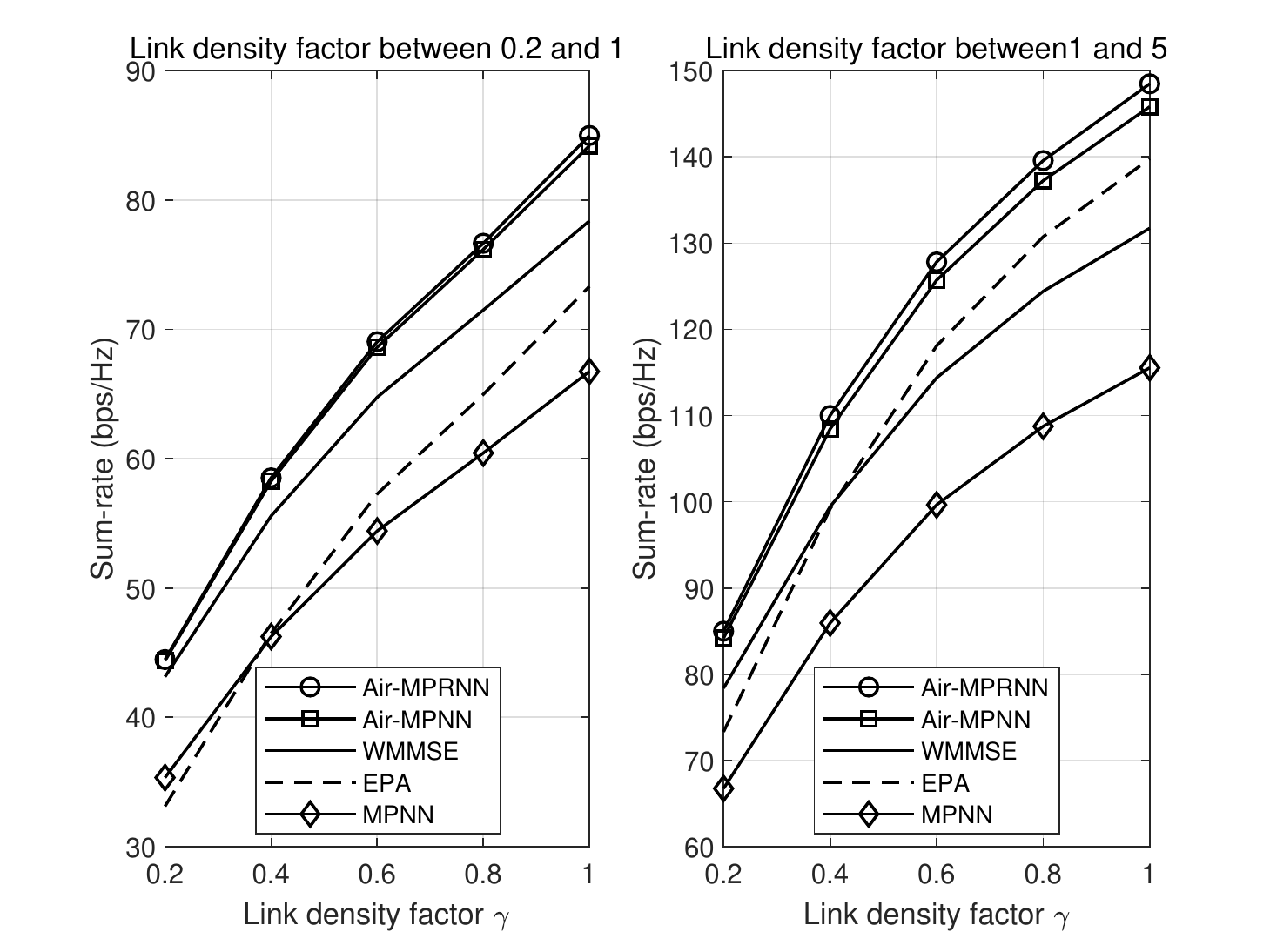}}
	\caption{The sum-rate versus the link density factor $\gamma$, where $K=20$, $\delta_{\rm csi} = 1$, $\delta_{\rm mp} = 5$.}
	\label{fig_test_density}
\end{figure}
We show the relationship between the sum-rate and the link density factor in Fig. \ref{fig_test_density}. Recall that a higher value of link density factor implies that the D2D pairs are less dense in the testing data compared with training data. We can first observe that the EPA achieves low sum-rate when the density is high, but it outperforms the WMMSE when the density is low. This is because that the EPA selects the maximum transmit power and it is the optimal policy when the interference between all the D2D pairs is negligible. Furthermore, the proposed Air-MPNN and Air-MPRNN frameworks can outperform the existing algorithms and scale well with the link density factor. These results validate the scalability of the proposed frameworks.


%
\section{Conclusions}
In this paper, we investigated the power allocation problem in dense wireless networks with multiple transceiver pairs. We developed GNN-based frameworks for distributed power allocation by taking the signaling overhead in terms of CSI estimation and message passing into account. Taking the sum-rate maximization problem as an example, we first analyzed the signaling overhead of the existing MPNN framework. Our results showed that both the CSI estimation and message passing overhead can grow quadratically as the network size increases. Inspired from AirComp, we then proposed a novel Air-MPNN framework by passing and aggregating the messages over-the-air. Specifically, each node can determine its pilot transmit power based on its embedding and local state, and broadcasts the pilot signal simultaneously. The messages, including the local embedding, local state and the CSI of the interference links, can be aggregated efficiently by evaluating the total received power of all the interference links. The overall signaling overhead of the proposed Air-MPNN grows linearly as the network size increases. Based on the Air-MPNN, we further proposed the Air-MPRNN by introducing the RNN into the framework. Different from Air-MPNN, the Air-MPRNN can exploit the temporal information of the wireless networks. It utilizes the graph embedding and CSI in the previous frame to update the graph embedding in the current frame. The Air-MPRNN can be potentially implemented in the existing wireless networks without changing the frame structures in standards by sending one pilot during each frame. It was shown that the proposed frameworks outperform the existing algorithms in terms of signaling overhead and sum-rate for various system parameters.
\subsection{Guidelines and Future Work}
The proposed GNN-based frameworks are particularly suitable for large-scale networks because the signaling overhead is limited, and they are scalable to the network size. In addition, the Air-MPNN framework can be applied to the first frame of a new scenario where historical information is not available, then we can apply the proposed Air-MPRNN framework for the following frames. It is also worth noting that the proposed GNN-based frameworks may not be robust when the wireless environment changes dramatically, while the conventional optimization-based methods can be robust for arbitrary wireless environment at the cost of complicated channel training and iterations. Finally, the existing centralized and offline training method for the proposed GNN-based frameworks may not obtain the optimal power allocation scheme in dynamic wireless networks. Developing a distributed in-network training method such that the neural parameters can be tuned online according to the changes of wireless environment remains an open problem and deserves further investigation.
\appendices
\ifCLASSOPTIONcaptionsoff
  \newpage
\fi

\bibliographystyle{IEEEtran}
\bibliography{References}
\end{document}